\documentclass[runningheads]{llncs}

\usepackage{graphicx}
\usepackage{hyperref}
\usepackage{xcolor}
\usepackage{textcomp}
\usepackage{bbding}
\usepackage{amsmath}
\usepackage{amssymb}
\usepackage{pgf}
\usepackage{tikz}
\usetikzlibrary{arrows, automata, shapes, petri, positioning, calc}
\usepackage{rotating}
\usepackage{float}
\usepackage[ruled,vlined,linesnumbered]{algorithm2e}
\usepackage{subcaption}
\usepackage{booktabs}
\usepackage[labelfont=bf,tableposition=top]{caption}
\usepackage{multirow}
\SetKw{Continue}{continue}
\SetKw{Break}{break}

\hypersetup{
	colorlinks,
	linkcolor={red!50!black},
	citecolor={blue!60!black},
	urlcolor={blue!80!black}
}

\def\bag{{\cal B}}
\def\nomove{\gg}

\begin{document}

\title{Efficient Time and Space Representation\\of Uncertain Event Data\thanks{
		We thank the Alexander von Humboldt (AvH) Stiftung for supporting our research interactions.
		Please do not print this document unless strictly necessary.
}}

\author{Marco Pegoraro~\Envelope\orcidID{0000-0002-8997-7517} \and Merih Seran Uysal\orcidID{0000-0003-1115-6601} \and Wil M.P. van der Aalst\orcidID{0000-0002-0955-6940}}

\authorrunning{Pegoraro et al.}

\institute{Process and Data Science Group (PADS) \\ Department of Computer Science, RWTH Aachen University, Aachen, Germany
	\email{\{pegoraro, uysal, wvdaalst\}@pads.rwth-aachen.de}\\
	\url{http://www.pads.rwth-aachen.de/}}

\maketitle

\begin{abstract}
Process mining is a discipline which concerns the analysis of execution data of operational processes, the extraction of models from event data, the measurement of the conformance between event data and normative models, and the enhancement of all aspects of processes. Most approaches assume that event data is accurately capture behavior. However, this is not realistic in many applications: data can contain uncertainty, generated from errors in recording, imprecise measurements, and other factors. Recently, new methods have been developed to analyze event data containing uncertainty; these techniques prominently rely on representing uncertain event data by means of graph-based models explicitly capturing uncertainty. In this paper, we introduce a new approach to efficiently calculate a graph representation of the behavior contained in an uncertain process trace. We present our novel algorithm, prove its asymptotic time complexity, and show experimental results that highlight order-of-magnitude performance improvements for the behavior graph construction.

\keywords{Process Mining \and Uncertain Data \and Partial Order.}
\end{abstract}

\section{Introduction}\label{sec:introduction}
The pervasive diffusion of digitization, which gained momentum thanks to advancements in electronics and computing at the end of the last century, brought a wave of innovation in the tools supporting businesses and companies. The past decades have seen the rise of \emph{Process-Aware Information Systems} (PAISs) -- useful to structurally support processes in a business -- as well as research disciplines such as Business Process Management (BPM) and \emph{process mining}.

Process mining~\cite{van2016process} is a field of research that enables process analysis in a data-driven manner. Process mining analyses are based on recordings of tasks and events in a process, memorize in an ensemble of information systems which support business operations. These recordings are exported and systematically collected in databases called \emph{event logs}. Using an event log as a starting point, process mining techniques can automatically obtain a process model illustrating the behavior of the real-life process (\emph{process discovery}) and identify anomalies and deviations between the execution data of a process and a normative model (\emph{conformance checking}). Process mining is a subfield of data science which is quickly growing in interest both in academia and industry. Over 30 commercial software tools are available on the market for analyzing processes and their execution data. Process mining tools are used by process experts to analyze processes in tens of thousands of organizations, e.g., within Siemens, over 6000 employees actively use process mining to improve internal procedures.

Commercial process mining software is able to discover and build a process model from an event log. Most of the process discovery algorithms implemented in these tools are based on tallying the number of \emph{directly-follows relationships} between activities in the execution data of the process. The more frequently a specific activity immediately follows another one in the execution log of a process, the stronger a causality and/or precedence implication between the two activities is understood to be. Such directly-follows relationships are also the basis for the identification of more complex and abstract constructs in the workflow of a process, such as interleaving or parallelism of activities. These relationships between activities are often represented in a labeled directed graph called the \emph{Directly-Follows Graph} (DFG).

In recent times, a new type of event logs has gained research interest: \emph{uncertain event logs}~\cite{pegoraro2019mining}. Such execution logs contain, rather than precise values, an indication of the possible values that event attributes can acquire. In this paper, we will consider the setting where uncertainty is represented by either an interval or a set of possible values for an event attribute. Moreover, we will consider the case in which an event has been recorded in the event log albeit it did not happen in reality.

Uncertainty in event logs is best illustrated with a real-life example of a process that can generate uncertain data in an information system. Let us consider the following process instance, a simplified version of anomalies that are actually occurring in processes of the healthcare domain. An elderly patient enrolls in a clinical trial for an experimental treatment against myeloproliferative neoplasms, a class of blood cancers. The enrollment in this trial includes a lab exam and a visit with a specialist; then, the treatment can begin. The lab exam, performed on the 8th of July, finds a low level of platelets in the blood of the patient, a condition known as thrombocytopenia (TP). At the visit, on the 10th of May, the patient self-reports an episode of night sweats on the night of the 5th of July, prior the lab exam: the medic notes this, but also hypothesized that it might not be a symptom, since it can be caused not by the condition but by external factors (such as very warm weather). The medic also reads the medical records of the patient and sees that, shortly prior to the lab exam, the patient was undergoing a heparine treatment (a blood-thinning medication) to prevent blood clots. The thrombocytopenia found with the lab exam can then be primary (caused by the blood cancer) or secondary (caused by other factors, such as a drug). Finally, the medic finds an enlargement of the spleen in the patient (splenomegaly). It is unclear when this condition has developed: it might have appeared in any moment prior to that point. The medic decides to admit the patient in the clinical trial, starting 12th of July.

These events generate the trace of Table~\ref{table:uncertaintrace} in the information system of the hospital. For clarity, the timestamp field only reports the day of the month.

\begin{table}[]
	\caption{The uncertain trace of an instance of healthcare process used as running example. The ``Case ID'' is a unique identifier for all events in a single process case; the ``Event ID'' is a unique identifier for the events in the trace. The ``Timestamp'' field indicates either the moment in time in which the event has happened, or the interval of time in which the event may have happened. The ``Activity'' field indicates the possible choices for the activity instantiated by the event. Lastly, the ``Indeterminate event'' field contains a ``!'' if the corresponding event has surely occurred, and a ``?'' if it might have been recorded despite not occurring in reality. For the sake of readability, in the timestamps column only reports the day of the month.}
	\label{table:uncertaintrace}
	\centering
	\begin{tabular}{ccccc}
		\textbf{Case ID}        & \textbf{Event ID} & \textbf{Timestamp}                                                                                                     & \textbf{Activity}             & \multicolumn{1}{l}{\textbf{Indet. event}} \\ \hline
		\multicolumn{1}{|c|}{ID327} & \multicolumn{1}{c|}{$e_1$} 
		& \multicolumn{1}{c|}{5}                                                                         & \multicolumn{1}{c|}{\emph{NightSweats}}        & \multicolumn{1}{c|}{?}                    \\ \hline
		\multicolumn{1}{|c|}{ID327}& \multicolumn{1}{c|}{$e_2$} & \multicolumn{1}{c|}{8}                                                                         & \multicolumn{1}{c|}{\{\emph{PrTP}, \emph{SecTP}\}} & \multicolumn{1}{c|}{!}                    \\ \hline
		\multicolumn{1}{|c|}{ID327}& \multicolumn{1}{c|}{$e_3$} & \multicolumn{1}{c|}{[4, 10]}                                                                         & \multicolumn{1}{c|}{\emph{Splenomeg}} & \multicolumn{1}{c|}{!}                    \\ \hline
		\multicolumn{1}{|c|}{ID327}& \multicolumn{1}{c|}{$e_4$} & \multicolumn{1}{c|}{12}                                                                         & \multicolumn{1}{c|}{\emph{Adm}}        & \multicolumn{1}{c|}{!}                    \\ \hline
	\end{tabular}
\end{table}

Event $e_2$ has been recorded with two possible activity labels (\emph{PrTP} or \emph{SecTP}). This is an example of uncertainty on activities. Some events, e.g. $e_3$, do not have a precise timestamp but a time interval in which the event could have happened has been recorded: in some cases, this causes the loss of a precise ordering of events (e.g. $e_2$ and $e_3$). This is an instance of uncertainty on the time dimension, i.e., on timestamps. As evident by the ``?'' symbol, $e_1$ is an indeterminate event: it has been recorded, but it is not guaranteed to have actually happened. Conversely, the ``!'' symbol indicates that the event has been recorded while certainly occurring in reality, i.e., it has been recorded correctly in the information system (e.g., the event $e_4$).

Quality problems and imprecision in data recording such as the ones described in the running example as source of uncertainty are not uncommon; in some settings, they are a frequent occurrence. Healthcare processes are specifically know to be afflicted by these sorts of data anomalies, especially if parts of the process rely on recording information on paper~\cite{van2011process, kurniati2019assessment}. Existing process mining software cannot manage such uncertain event data. When mining the processes where uncertainty in execution data is prominent, a natural first approach is to filter the event log eliminating cases where uncertainty appear. Unfortunately, in processes with a large portion of cases are affected by such data anomalies, filtering without losing essential information about the process is not feasible.

As a consequence, new process mining methods to inspect and analyze it have to be developed. Uncertain timestamps are the most prominent and critical source of uncertain behavior in a process trace. For example, if $n$ events have uncertain timestamps such that their order is unknown, the possible configurations that the control-flow of the trace can assume are all the $n!$ permutations of the events, in the case where all events in a case have timestamps defined by mutually overlapping intervals. This is the worst possible scenario in terms of amount of uncertain behavior introduced by uncertainty on the timestamps of the events ins a trace. Thus, it is important to capture the time relationships between events in a compact and effective way. This is accomplished by the construction of a \emph{behavior graph}, a directed acyclic graph that expresses precedence between events. Figure~\ref{fig:bg} shows the behavior graph of the process trace in Table~\ref{table:uncertaintrace}; every known precedence relationship between events is represented by the edges of the graph, while the pairs of event for which the order is unknown remain unconnected. Effectively, this creates a representation of the partial order where the arcs are defined by the possible values of the timestamps contained in the trace, and where the nodes may refer to sets of possible activities. As we will see, this construct is central to effectively implement both process discovery and conformance checking applied to uncertain event data.

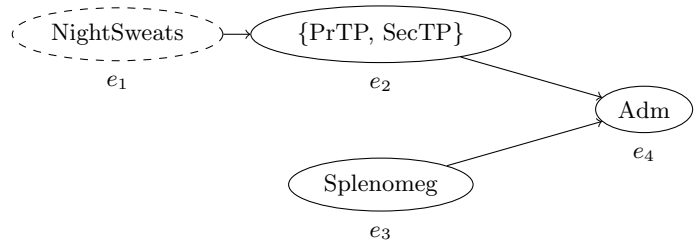
\begin{figure}
	\centering
	\begin{tikzpicture}[->, node distance=3.5cm, nodes={draw, ellipse}]
	
	\node[dashed]	(A)	[label=below:$e_1$]								{$\text{NightSweats}$};
	\node			(B)	[right of=A, label=below:$e_2$]					{$\{\text{PrTP, SecTP}\}$};
	\node			(C)	[below of=B, yshift=1.5cm, label=below:$e_3$]	{$\text{Splenomeg}$};
	\node			(D)	[right of=C, yshift=1cm, label=below:$e_4$]		{$\text{Adm}$};
	
	\path
	(A) edge (B)
	(B) edge (D)
	(C) edge (D);
	\end{tikzpicture}
	\caption{The behavior graph of the trace in Table~\ref{table:uncertaintrace}. Every node represents an event; the labels in the nodes represent the activity, or set of activities, associated with the event. The arcs represent the partial order relationship between events as defined by their timestamps. The indeterminate event, which might not have occurred, is represented by a dashed node.}
	\label{fig:bg}
\end{figure}

In a previous paper~\cite{pegoraro2020efficient}, we presented a time-effective algorithm for the construction of the behavior graph of an uncertain process trace, attaining quadratic time complexity on the number of events in the trace.

This paper elaborates on this previous result, by providing the proof of the correctness of the new algorithm. Additionally, we will show the improvement in performance both theoretically, via asymptotic complexity analysis, and in practice, with experiments on various uncertain event logs comparing computation times of the baseline method against the novel construction algorithm. Furthermore, the version of the algorithms presented in this paper is refined so to preprocess uncertain traces in linear time, individuating the variants -- which share the same behavior graph --, and proceed to perform the construction of the behavior graph only once per variant. This slightly improves performance, and more importantly, enables the representation of an uncertain event log as a multiset of behavior graphs, greatly reducing the memory requirements to store the log. This enables a streamlined application of process mining techniques on event data where uncertainty is present.

The algorithms have been implemented within the PROVED (\emph{PRocess mining OVer uncErtain Data}) library\footnote{\url{https://github.com/proved-py/proved-core/tree/Efficient_Time_and_Memory_Representation_for_Uncertain_Event_Data}}, based on the PM4Py process mining framework~\cite{berti2019process}.

The reminder of the paper is structured as follows: Section~\ref{sec:conf} motivates the study of uncertainty in process mining by illustrating an example of conformance checking over uncertain event data. Section~\ref{sec:disc} strengthens the motivation showing the discovery of process models of uncertain event logs. Section~\ref{sec:materialsandmethods} provides formal definitions, describes the baseline technique for our research, and shows a new and more efficient method to obtain a behavior graph of an uncertain trace. Section~\ref{sec:proofs} presents the analysis of asymptotic complexity for both the baseline and the novel method. Section~\ref{sec:experiments} shows results of experiments on both synthetic and real-life uncertain event logs comparing the efficiency of both methods to compute behavior graphs. Section~\ref{sec:related} explores recent related works in the context of uncertain event data and the management of alterations of data in process mining. Finally, Section~\ref{sec:conclusions} discusses the output of the experiments and concludes the paper.

\section{Conformance Checking over Uncertain Data}\label{sec:conf}

Conformance checking is one of the main tasks in process mining, and consists in measuring the deviation between process execution data (usually in the form of a trace) and a reference model. This is particularly useful for organization, since it enables them to compare historical process data against a normative model created by process experts to identify anomalies and deviations in their operations.

Let us assume that we have access to a normative model for the disease of the patient in the running example, shown in Figure~\ref{fig:samplemodel}.

\begin{figure}
	\centering
	\begin{tikzpicture}[node distance=.7cm and .3cm, >=stealth']
	
	\tikzstyle{place} = [circle,draw,thick,minimum size=6mm]
	\tikzstyle{transition} = [rectangle,draw,thick,minimum size=4mm]
	\tikzstyle{invisible} = [transition, fill=black]
	\tikzstyle{finaltoken} = [token, fill=black!30]
	
	\node [place,tokens=1] (p1) {};
	
	\node [invisible] (t0) [right= of p1, label=above:{\scriptsize $t_1$}] {};
	\draw [->] (p1) to (t0.west);
	
	\node [place] (p11) [above right= of t0] {};
	\draw [->] (t0.east) to (p11);
	
	\node [place] (p12) [below right= of t0] {};
	\draw [->] (t0.east) to (p12);
	
	\node [transition] (t1) [right= of p11, label=above:{\scriptsize $t_2$}] {NightSweats};
	\draw [->] (p11) to (t1.west);
	
	\node [place] (p21) [right= of t1] {};
	\draw [->] (t1.east) to (p21);
	
	\node [invisible] (t01) [below right= of p21, label=above:{\scriptsize $t_4$}] {};
	\draw [->] (p21) to (t01.west);
	
	\node [place] (p22) [below left= of t01] {};
	
	\draw [->] (p22) to (t01.west);
	
	\node [transition] (t2) at ($(p12)!0.5!(p22)$) [label=above:{\scriptsize $t_3$}] {Splenomeg};
	\draw [->] (p12) to (t2.west);
	
	\draw [->] (t2.east) to (p22);
	
	\node [place] (p2) [right= of t01] {};
	\draw [->] (t01.east) to (p2);
	
	\node [transition] (t4) [right= of p2, label=above:{\scriptsize $t_5$}] {PrTP};
	\draw [->] (p2) to (t4.west);
	
	\node [place] (p3) [right= of t4] {};
	\draw [->] (t4.east) to (p3);
	
	\node [transition] (t6) [right= of p3, label=above:{\scriptsize $t_6$}] {Adm};
	\draw [->] (p3) to (t6);
	
	\node [place] (p6) [right= of t6] {};
	\draw [->] (t6) to (p6);
	\node [finaltoken] at (p6) {};
	\end{tikzpicture}
	\caption{A normative model for the healthcare process case in the running example. The initial marking is displayed; the gray ``token slot'' represents the final marking.}
	\label{fig:samplemodel}
\end{figure}
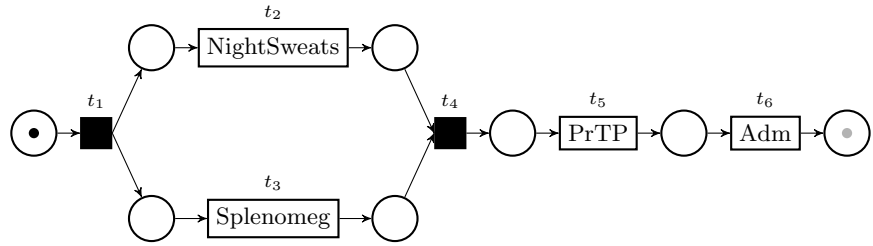

This model essentially states that the disease is characterised by the occurrence of night sweats and splenomegaly on the patient, which can be verified concurrently, and then should be followed by primary thrombocytopenia. We would like to measure the conformance between the trace in Table~\ref{table:uncertaintrace} and this normative model. A very popular conformance checking technique works via the computation of \emph{alignments}~\cite{adriansyah2010towards}. Through this technique, we are able to identify the deviations in the execution of a process, in the form of behavior happening in the model but not in the trace, and behavior happening in the trace but not in the model. These deviations are identified, and used as basis to compute a conformance score between the trace and the process model.

The formulation of alignments in~\cite{adriansyah2010towards} is not applicable to an uncertain trace. In fact, depending on the instantiation of the uncertain attributes of events -- like the timestamp of $e_3$ in the trace -- the order of event may differ, and so may the conformance score. However, we can look at the best- and worst case-scenarios: the instantiation of attributes of the trace that entails the minimum and maximum number of deviations with respect to the reference model. In our example, two possible outcomes for the sample trace are $\langle \textit{NightSweats}, \textit{Splenomeg}, \textit{PrTP}, \textit{Adm} \rangle$ and $\langle \textit{SecTP}, \textit{Splenomeg}, \textit{Adm} \rangle$; both represent the sequence of event that might have happened in reality, but their conformance score is very different. The alignment of the first trace against the reference model can be seen in Table~\ref{table:bestalign}, while the alignment of the second trace can be seen in Table~\ref{table:worstalign}. These two outcomes of the uncertain trace in Table~\ref{table:uncertaintrace} represent, respectively, the minimum and maximum amount of deviation possible with respect to the reference model, and define then a lower and upper bound for conformance score.

\begin{table}[]
	\caption{An optimal alignment for $\langle \textit{NightSweats}, \textit{Splenomeg}, \textit{PrTP}, \textit{Adm} \rangle$, one of the possible instantiations of the trace in Table~\ref{table:uncertaintrace}, against the model in Figure~\ref{fig:samplemodel}. This alignment has a deviation cost equal to 0, and corresponds to the best case scenario for conformance between the process model and the uncertain trace.}
	\label{table:bestalign}
	\centering
	\begin{tabular}{cccccc}
		\multicolumn{1}{|c|}{$\nomove$}	& \multicolumn{1}{c|}{NightSweats}	& \multicolumn{1}{c|}{Splenomeg}	& \multicolumn{1}{c|}{$\nomove$}	& \multicolumn{1}{c|}{PrTP}		& \multicolumn{1}{c|}{Adm}		\\ \hline
		\multicolumn{1}{|c|}{$\tau$}	& \multicolumn{1}{c|}{NightSweats}	& \multicolumn{1}{c|}{Splenomeg}	& \multicolumn{1}{c|}{$\tau$}	& \multicolumn{1}{c|}{PrTP}		& \multicolumn{1}{c|}{Adm}		\\ 
		\multicolumn{1}{|c|}{$t_1$}		& \multicolumn{1}{c|}{$t_2$}	& \multicolumn{1}{c|}{$t_3$}	& \multicolumn{1}{c|}{$t_4$}				& \multicolumn{1}{c|}{$t_5$}	& \multicolumn{1}{c|}{$t_6$}	\\ 
	\end{tabular}
\end{table}

\begin{table}[]
	\caption{An optimal alignment for $\langle \textit{SecTP}, \textit{Splenomeg}, \textit{Adm} \rangle$, one of the possible instantiations of the trace in Table~\ref{table:uncertaintrace}, against the model in Figure~\ref{fig:samplemodel}. This alignment has a deviation cost equal to 3, caused by 2 moves on model and 1 move on log, and corresponds to the worst case scenario for conformance between the process model and the uncertain trace.}
	\label{table:worstalign}
	\centering
	\begin{tabular}{ccccccc}
		\multicolumn{1}{|c|}{$\nomove$}& \multicolumn{1}{c|}{SecTP}	& \multicolumn{1}{c|}{$\nomove$}	& \multicolumn{1}{c|}{Splenomeg}	& \multicolumn{1}{c|}{$\nomove$}	& \multicolumn{1}{c|}{$\nomove$}				& \multicolumn{1}{c|}{Adm}		\\ \hline
		\multicolumn{1}{|c|}{$\tau$}	& \multicolumn{1}{c|}{$\nomove$}	& \multicolumn{1}{c|}{NightSweats}		& \multicolumn{1}{c|}{Splenomeg}	& \multicolumn{1}{c|}{$\tau$}	& \multicolumn{1}{c|}{PrTP}			& \multicolumn{1}{c|}{Adm}	\\ 
		\multicolumn{1}{|c|}{$t_1$}		& \multicolumn{1}{c|}{}			& \multicolumn{1}{c|}{$t_2$}		& \multicolumn{1}{c|}{$t_3$}	& \multicolumn{1}{c|}{$t_4$}				& \multicolumn{1}{c|}{$t_5$}		& \multicolumn{1}{c|}{$t_6$}		\\ 
	\end{tabular}
\end{table}

The minimum and maximum bounds for conformance score of an uncertain trace and a reference process model can be found with the uncertain version of the alignment technique that we first described in~\cite{pegoraro2019mining}. In order to find such bounds, it is necessary to build a Petri net able to simulate all possible behaviors in the uncertain trace, called the \emph{behavior net}. Obtaining a behavior net is possible through a construction that uses behavior graphs as a starting point, using the structural information therein contained to connect places and transitions in the net. The behavior net of the trace in Table~\ref{table:uncertaintrace} is shown in Figure~\ref{fig:bn}.

\begin{figure}
	\centering
	\begin{tikzpicture}[node distance=.7cm and .9cm, >=stealth']
	
	\tikzstyle{place} = [circle,draw,thick,minimum size=6mm]
	\tikzstyle{transition} = [rectangle,draw,thick,minimum size=4mm]
	\tikzstyle{invisible} = [transition, fill=black]
	\tikzstyle{finaltoken} = [token, fill=black!30]
	
	\node [place,tokens=1] (p1) [label=above:{\scriptsize $(\textsc{start}, e_1)$}] {};
	
	\node [transition] (t1) [above right= of p1, label=above:{\scriptsize $(e_1, NightSweats)$}] {NightSweats};
	\draw [->] (p1) to (t1.west);
	
	\node [invisible] (t2) [below right= of p1, label=above:{\scriptsize $(e_1, \tau)$}] {NightSweats};
	\draw [->] (p1) to (t2.west);
	
	\node [place] (p2) [below right= of t1, label=above:{\scriptsize $(e_1, e_2)$}] {};
	\draw [->] (t1.east) to (p2);
	\draw [->] (t2.east) to (p2);
	
	\node [transition] (t3) [above right= of p2, label=above:{\scriptsize $(e_2, PrTP)$}] {PrTP};
	\draw [->] (p2) to (t3.west);
	
	\node [transition] (t4) [below right= of p2, label=above:{\scriptsize $(e_2, SecTP)$}] {SecTP};
	\draw [->] (p2) to (t4.west);
	
	\node [place] (p3) [below right= of t3, label=above:{\scriptsize $(e_2, e_4)$}] {};
	\draw [->] (t3.east) to (p3);
	\draw [->] (t4.east) to (p3);
	
	\node [place,tokens=1] (p4) [below left= of t2, label=above:{\scriptsize $(\textsc{start}, e_3)$}] {};
	
	\node [place] (p5) [below right= of t4, label=above:{\scriptsize $(e_3, e_4)$}] {};
	
	\node [transition] (t5) at ($(p4)!0.5!(p5)$) [label=above:{\scriptsize $(e_3, Splenomeg)$}] {Splenomeg};
	\draw [->] (p4) to (t5);
	\draw [->] (t5) to (p5);
	
	\node [transition] (t6) [above right= of p5, label=above:{\scriptsize $(e_4, Adm)$}] {Adm};
	\draw [->] (p3) to (t6.north west);
	\draw [->] (p5) to (t6.south west);
	
	\node [place] (p6) [right= of t6, label=above:{\scriptsize $(e_4, \textsc{end})$}] {};
	\draw [->] (t6) to (p6);
	\node [finaltoken] at (p6) {};
	\end{tikzpicture}
	\caption{The behavior net representing the behavior of the uncertain trace in Table~\ref{table:uncertaintrace} and obtained thanks to its behavior graph. The initial marking is displayed; the gray ``token slot'' represents the final marking. This artifact is necessary to perform conformance checking between uncertain traces and a reference model.}
	\label{fig:bn}
\end{figure}
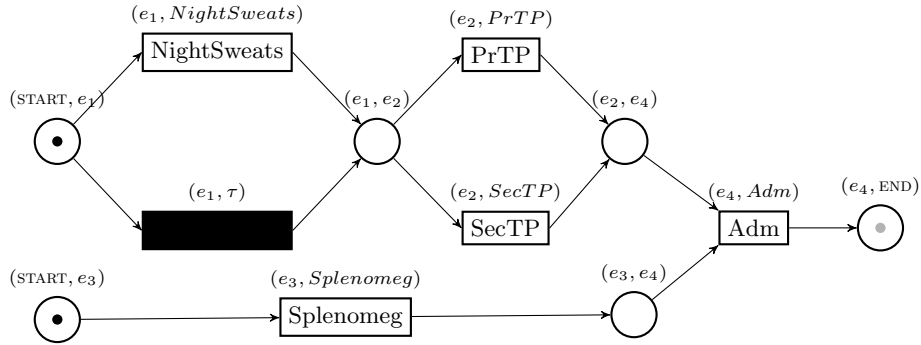

The alignments in Tables~\ref{table:bestalign} and~\ref{table:worstalign} show how we can get actionable insights from process mining over uncertain data. In some applications it is reasonable and appropriate to remove uncertain data from an event log via filtering, and then compute log-level aggregate information -- such as total number of deviations, or average deviations per trace -- using the remaining certain data. Even in processes where this is possible, doing so prevents the important process mining task of case diagnostic. Conversely, uncertain alignments allow not only to have best- and worst-case scenarios for a trace, but also to individuate the specific deviations affecting both scenarios. For instance, the alignments of the running example can be implemented in a system that warns the medics that the patient might have been affected by a secondary thrombocytopenia not explained by the model of the disease. Since the model indicates that the disease should develop primary thrombocytopenia as a symptom, this patient is at risk of both types of platelets deficit simultaneously, which is a serious situation. The medics can then intervene to avoid this complication, and performing more exams to ascertain the cause of the patient's thrombocytopenia.

\section{Process Discovery over Uncertain Data}\label{sec:disc}

Process discovery is another main objective in process mining, and involves automatically creating a process model from event data. Many process discovery algorithms rely on the concept of \emph{directly-follows relationships} between activities to gather clues on how to structure the process model. \emph{Uncertain Directly-Follows Graphs} (UDFGs) enable the representation of directly-follows relationships in an event log under conditions of uncertainty in the event data; they consist in directed graphs where the activity labels appearing in an event log constitute the nodes, and the edges are decorated with information on the minimum and maximum frequency observable for the directly-follows relation between pair of activities.

Let us examine an example of UDFG. In order to build a significant example, we need to introduce an entire uncertain event log; since the full table notation for uncertain traces becomes cumbersome for entire logs, let us utilize a shorthand simplified notation. In a trace, we represent an uncertain event with multiple possible activity labels by listing all the associated labels between curly braces.

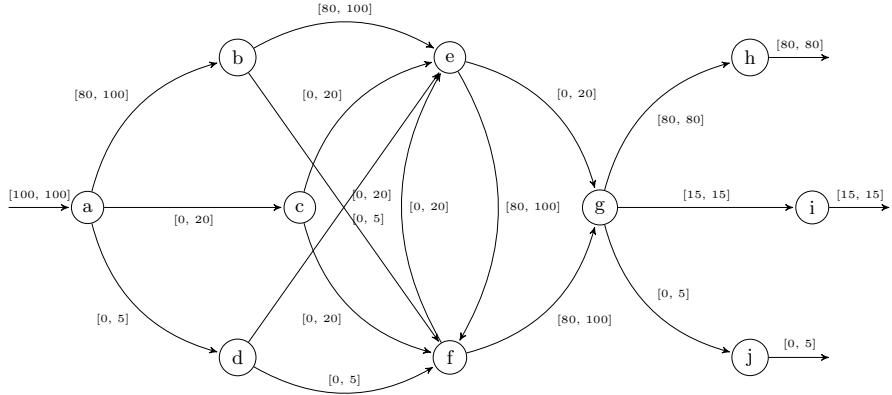
\begin{figure}[h]
	\centering
	\resizebox{\textwidth}{!}{%
	\begin{tikzpicture}[->,>=stealth',shorten >=1pt,node distance=3.4cm,auto,main node/.style={circle,draw,align=center}]
	\node[main node]	(A)	[]						{a};
	\node[main node]	(B)	[above right of=A]		{b};
	\node[main node]	(C)	[right of=A]			{c};
	\node[main node]	(D)	[below right of=A]		{d};
	\node[main node]	(E)	[above right of=C]		{e};
	\node[main node]	(F)	[below right of=C]		{f};
	\node[main node]	(G)	[below right of=E]		{g};
	\node[main node]	(H)	[above right of=G]		{h};
	\node[main node]	(I)	[right of=G]			{i};
	\node[main node]	(J)	[below right of=G]		{j};
	\node[left=1cm of A] (K) {};
	\node[right=1cm of H] (L) {};
	\node[right=1cm of I] (M) {};
	\node[right=1cm of J] (N) {};
	\path
	(A) edge [bend left] node {\tiny [80, 100]} (B)
	(A) edge node [swap] {\tiny [0, 20]} (C)
	(A) edge [bend right] node [swap] {\tiny [0, 5]} (D)
	(B) edge [bend left] node {\tiny [80, 100]} (E)
	(B) edge node {\tiny [0, 20]} (F)
	(C) edge [bend left] node {\tiny [0, 20]} (E)
	(C) edge [bend right] node [swap] {\tiny [0, 20]} (F)
	(D) edge node [swap] {\tiny [0, 5]} (E)
	(D) edge [bend right] node {\tiny [0, 5]} (F)
	(E) edge [bend left] node {\tiny [80, 100]} (F)
	(F) edge [bend left] node [swap] {\tiny [0, 20]} (E)
	(E) edge [bend left] node {\tiny [0, 20]} (G)
	(F) edge [bend right] node [swap] {\tiny [80, 100]} (G)
	(G) edge [bend left] node [swap] {\tiny [80, 80]} (H)
	(G) edge node {\tiny [15, 15]} (I)
	(G) edge [bend right] node {\tiny [0, 5]} (J)
	(K) edge node {\tiny [100, 100]} (A)
	(H) edge node {\tiny [80, 80]} (L)
	(I) edge node {\tiny [15, 15]} (M)
	(J) edge node {\tiny [0, 5]} (N)
	;
	\end{tikzpicture}
	}
	\caption{The \emph{Uncertain Directly-Follows Graph} (UDFG) computed based on the uncertain event log $\langle a, b, e, f, g, h \rangle^{80}$, $\langle a, \{b, c\}, [e, f], g, i \rangle^{15}$, $\langle a, \{b, c, d\}, [e, f], g, \overline{j} \rangle^{5}$. The arcs are labeled with the minimum and maximum number of directly-follows relationship observable between activities in the corresponding trace. Notice the large amount of connections extracted from a single and rather short trace. Uncertain directly-follows relationships are inferred from the behavior graphs of the traces in the log. The construction of this object is necessary to perform automatic process discovery over uncertain event data.}
	\label{fig:udfg}
\end{figure}

\begin{figure}[]
	\centering
	\subcaptionbox{A process model that can only replay the relationships appearing in the certain parts of the traces in the uncertain log. Here, information from uncertainty has been excluded completely.\label{subfig:1}\\}{%
		\begin{tikzpicture}[node distance=.7cm and .3cm, >=stealth']
		
		\tikzstyle{place} = [circle,draw,thick,minimum size=6mm]
		\tikzstyle{transition} = [rectangle,draw,thick,minimum size=4mm]
		\tikzstyle{invisible} = [transition, fill=black]
		\tikzstyle{finaltoken} = [token, fill=black!30]
		
		\node [place,tokens=1] (p1) {};
		\node [transition] (A) [right= of p1] {a};
		\node [place] (p2) [right= of A] {};
		\node [transition] (B) [right= of p2] {b};
		\node [place] (p3) [right= of B] {};
		\node [transition] (E) [right= of p3] {e};
		\node [place] (p4) [right= of E] {};
		\node [transition] (F) [right= of p4] {f};
		\node [place] (p5) [right= of F] {};
		\node [transition] (G) [right= of p5] {g};
		\node [place] (p6) [right= of G] {};
		\node [transition] (H) [above right= of p6] {h};
		\node [transition] (I) [below right= of p6] {i};
		\node [place] (p7) [below right= of H] {};
		\node [finaltoken] at (p7) {};
		
		\draw [->] (p1) to (A.west);
		\draw [->] (A.east) to (p2);
		\draw [->] (p2) to (B.west);
		\draw [->] (B.east) to (p3);
		\draw [->] (p3) to (E.west);
		\draw [->] (E.east) to (p4);
		\draw [->] (p4) to (F.west);
		\draw [->] (F.east) to (p5);
		\draw [->] (p5) to (G.west);
		\draw [->] (G.east) to (p6);
		\draw [->] (p6) to (H.west);
		\draw [->] (p6) to (I.west);
		\draw [->] (H.east) to (p7);
		\draw [->] (I.east) to (p7);
		\end{tikzpicture}
	}%
	\hfill
	\subcaptionbox{A process model that can replay some -- but not all -- the relationships appearing in the uncertain parts of the traces in the uncertain log. This process model mediates between representing only certain observation and representing all the possible behavior in the process.\label{subfig:2}\\}{%
		\begin{tikzpicture}[node distance=.7cm and .3cm, >=stealth']
		
		\tikzstyle{place} = [circle,draw,thick,minimum size=6mm]
		\tikzstyle{transition} = [rectangle,draw,thick,minimum size=4mm]
		\tikzstyle{invisible} = [transition, fill=black]
		\tikzstyle{finaltoken} = [token, fill=black!30]
		
		\node [place,tokens=1] (p1) {};
		\node [transition] (A) [right= of p1] {a};
		\node [place] (p2) [right= of A] {};
		\node [transition] (B) [above right= of p2] {b};
		\node [transition] (C) [below right= of p2] {c};
		\node [place] (p22) [below right= of B] {};
		\node [invisible] (Z) [right= of p22] {};
		\node [place] (p3) [above right= of Z] {};
		\node [place] (p31) [below right= of Z] {};
		\node [transition] (E) [right= of p3] {e};
		\node [transition] (F) [right= of p31] {f};
		\node [place] (p51) [right= of E] {};
		\node [place] (p5) [right= of F] {};
		\node [transition] (G) [above right= of p5] {g};
		\node [place] (p6) [right= of G] {};
		\node [transition] (H) [above right= of p6] {h};
		\node [transition] (I) [below right= of p6] {i};
		\node [place] (p7) [below right= of H] {};
		\node [finaltoken] at (p7) {};
		
		\draw [->] (p1) to (A.west);
		\draw [->] (A.east) to (p2);
		\draw [->] (p2) to (B.west);
		\draw [->] (p2) to (C.west);
		\draw [->] (B.east) to (p22);
		\draw [->] (C.east) to (p22);
		\draw [->] (p22) to (Z.west);
		\draw [->] (Z.east) to (p3);
		\draw [->] (Z.east) to (p31);
		\draw [->] (p3) to (E.west);
		\draw [->] (p31) to (F.west);
		\draw [->] (E.east) to (p51);
		\draw [->] (F.east) to (p5);
		\draw [->] (p51) to (G.west);
		\draw [->] (p5) to (G.west);
		\draw [->] (G.east) to (p6);
		\draw [->] (p6) to (H.west);
		\draw [->] (p6) to (I.west);
		\draw [->] (H.east) to (p7);
		\draw [->] (I.east) to (p7);
		\end{tikzpicture}
	}%
	\hfill
	\subcaptionbox{A process model that can replay all possible configurations of certain and uncertain traces in the uncertain log. This process model has the highest possible replay fitness, but is also very likely to contain some noisy or otherwise unwanted behavior.\label{subfig:3}\\}{%
		\begin{tikzpicture}[node distance=.7cm and .3cm, >=stealth']
		
		\tikzstyle{place} = [circle,draw,thick,minimum size=6mm]
		\tikzstyle{transition} = [rectangle,draw,thick,minimum size=4mm]
		\tikzstyle{invisible} = [transition, fill=black]
		\tikzstyle{finaltoken} = [token, fill=black!30]
		
		\node [place,tokens=1] (p1) {};
		\node [transition] (A) [right= of p1] {a};
		\node [place] (p2) [right= of A] {};
		\node [transition] (B) [above right= of p2] {b};
		\node [transition] (D) [ below right= of p2] {d};
		\node [place] (p22) [below right= of B] {};
		\node [transition] (C)  at ($(p2)!0.5!(p22)$) {c};
		\node [invisible] (Z) [right= of p22] {};
		\node [place] (p3) [above right= of Z] {};
		\node [place] (p31) [below right= of Z] {};
		\node [transition] (E) [right= of p3] {e};
		\node [transition] (F) [right= of p31] {f};
		\node [place] (p51) [right= of E] {};
		\node [place] (p5) [right= of F] {};
		\node [transition] (G) [above right= of p5] {g};
		\node [place] (p6) [right= of G] {};
		
		\node [transition] (I) [above right= of p6] {i};
		\node [transition] (J) [below right= of p6] {j};
		\node [transition] (H) [above= of I] {h};
		\node [invisible] (K) [below= of J] {k};
		\node [place] (p7) [below right= of I] {};
		\node [finaltoken] at (p7) {};
		
		\draw [->] (p1) to (A.west);
		\draw [->] (A.east) to (p2);
		\draw [->] (p2) to (B.west);
		\draw [->] (p2) to (C.west);
		\draw [->] (p2) to (D.west);
		\draw [->] (B.east) to (p22);
		\draw [->] (C.east) to (p22);
		\draw [->] (D.east) to (p22);
		\draw [->] (p22) to (Z.west);
		\draw [->] (Z.east) to (p3);
		\draw [->] (Z.east) to (p31);
		\draw [->] (p3) to (E.west);
		\draw [->] (p31) to (F.west);
		\draw [->] (E.east) to (p51);
		\draw [->] (F.east) to (p5);
		\draw [->] (p51) to (G.west);
		\draw [->] (p5) to (G.west);
		\draw [->] (G.east) to (p6);
		\draw [->] (p6) to (H.west);
		\draw [->] (p6) to (I.west);
		\draw [->] (p6) to (J.west);
		\draw [->] (p6) to (K.west);
		\draw [->] (H.east) to (p7);
		\draw [->] (I.east) to (p7);
		\draw [->] (J.east) to (p7);
		\draw [->] (K.east) to (p7);
		\end{tikzpicture}
	}%
	\caption{Three different process models for the uncertain event log $\langle a, b, e, f, g, h \rangle^{80}$, $\langle a, \{b, c\}, [e, f], g, i \rangle^{15}$, $\langle a, \{b, c, d\}, [e, f], g, \overline{j} \rangle^{5}$ obtained through inductive mining over an uncertain directly-follows graph. The different filtering parameters for the UDFG yield models with distinct features.}
	\label{fig:uncertaindiscovery}
\end{figure}

When two events have mutually overlapping timestamps, we write their activity labels between square brackets, and we indicate indeterminate events by overlining them\footnote{Notice that this notation does not allow for the representation of every possible uncertain trace: in the case of timestamp uncertainty, it can only express mutual overlapping of time intervals. However, this notation is adequate to illustrate an example for process discovery under uncertainty.}. For instance, the trace $\langle \overline{a}, \{b, c\}, [d, e] \rangle$ is a trace containing 4 events, of which the first is an indeterminate event with activity label $a$, the second is an uncertain event that can have either $b$ or $c$ as activity label, and the last two events have an interval as timestamp (and the two ranges overlap). Let us consider the following event log:

$\langle a, b, e, f, g, h \rangle^{80}, \langle a, \{b, c\}, [e, f], g, i \rangle^{15}, \langle a, \{b, c, d\}, [e, f], g, \overline{j} \rangle^{5}$.

For each pair of activities, we can count the minimum and maximum occurrences of a directly-follows relationship that can be observed in the log. The resulting UDFG is shown in Figure~\ref{fig:udfg}.

This graph can be then utilized to discover process models of uncertain logs via process discovery methods based on directly-follows relationships. In a previous work~\cite{pegoraro2019discovering} we illustrated this principle by applying it to the Inductive Miner, a popular discovery algorithm~\cite{leemans2013discovering}; the edges of the UDFG can be filtered via the information on the labels, in such a way that the final model can represent all possible behavior in the uncertain log, or only a part. Figure~\ref{fig:uncertaindiscovery} shows some process models obtained through inductive mining of the UDFG, as well as a description regarding how the model relate to the original uncertain log.

UDFGs of uncertain event data are obtained on the basis of the behavior graphs of the traces in an uncertain event log, making their construction a necessary step to perform uncertain process discovery. In fact, the frequency information labeling the edges of UDFGs are obtained through a search among the possible connections within the behavior graphs of all the traces in an uncertain log.

Thus, the construction of behavior graphs for uncertain traces is the basis of both process discovery and conformance checking on uncertain event data, since the behavior graph is a necessary processing step to mine information from uncertain traces. It is then important to be able to quickly and efficiently build the behavior graph of any given uncertain trace, in order to enable performant process discovery and conformance checking.

\section{Materials and Methods}\label{sec:materialsandmethods}

\subsection{Preliminaries}
Let us illustrate some basic concepts and notations, partially from~\cite{van2016process}:

\begin{definition}[Power set]
	The power set of a set $A$ is the set of all possible subsets of $A$, and is denoted with $\mathcal{P}(A)$. $\mathcal{P}_{NE}(A)$ denotes the set of all the non-empty subsets of $A$: $\mathcal{P}_{NE}(A) = \mathcal{P}(A)\setminus\{\emptyset\}$.
\end{definition}

\begin{definition}[Multiset]
	A \emph{multiset} is an extension of the concept of set that keeps track of the cardinality of each element. $\bag(A)$ is the set of all multisets over some set $A$. Multisets are denoted with square brackets, e.g. $b = [x, x, y]$, or with the cardinality of the elements as superscript, e.g. $b = [x^2, y]$. We denote the empty multiset with $[~]$. The operator $(\cdot)$ retrieves the cardinality of an element of the multiset, e.g. $b(x) = 2$, $b(y) = 1$, $b(z) = 0$. Over multisets we define $x \in b \Leftrightarrow b(x) \geq 1$, and $\textit{set}(b) = \{x \in b\}$. The multiset union $b = b_1 \uplus b_2$ is the multiset $b$ such that for all $x$ we have $b(x) = b_1(x) + b_2(x)$.
\end{definition}

\begin{definition}[Sequence and permutation]
	Given a set $X$, a finite \emph{sequence} over $X$ of length $n$ is a function $s \in X^* : \{1, \dots, n\} \rightarrow X$, and is written as $s = \langle s_1, s_2, \dots, s_n\rangle$. For any sequence $s$ we define $|s| = n$, $s[i] = s_i$, $x \in s \Leftrightarrow x \in \{s_1, s_2, \dots, s_n\}$ and $s \oplus s_0 = \langle s_1, s_2, \dots, s_n, s_0 \rangle$. A \emph{permutation} of the set $X$ is a sequence $x_\mathcal{S}$ that contains all elements of $X$ without duplicates: $x_\mathcal{S} \in X$, $X \in x_\mathcal{S}$, and for all $1 \leq i \leq |x_\mathcal{S}|$ and for all $1 \leq j \leq |x_\mathcal{S}|$, $x_\mathcal{S}[i] = x_\mathcal{S}[j] \rightarrow i = j$. We denote with $\mathcal{S}_X$ all such permutations of set $X$. We overload the notation for sequences: given a sequence $s = \langle s_1, s_2, \dots, s_n\rangle$, we will write $\mathcal{S}_s$ in place of $\mathcal{S}_{\{s_1, s_2, \dots, s_n\}}$.
\end{definition}

\begin{definition}[Transitive relation and correct evaluation order]\label{def:tr_rel}
	Let $X$ be a set of objects and $R$ be a binary relation $R \subseteq X \times X$. $R$ is \emph{transitive} if and only if for all $x, x', x'' \in X$ we have that $(x, x') \in R \wedge (x', x'') \in R \rightarrow (x, x'') \in R$. A \emph{correct evaluation order} is a permutation $s \in \mathcal{S}_X$ of the elements of the set $X$ such that for all $1 \leq i < j \leq |s|$ we have that $(s[i], s[j]) \in R$.
\end{definition}

\begin{definition}[Strict partial order]\label{def:st-par-ord}
	Let $S$ be a set of objects. Let $s, s' \in S$. A \emph{strict partial order} $(\prec, S)$ is a binary relation that have the following properties:
	\begin{itemize}
		\item Irreflexivity: $s \prec s$ is false.
		\item Transitivity: $s \prec s'$ and $s' \prec s''$ imply $s \prec s''$.\footnote{Formally, the third property of strict partial orders is antisimmetry: $s \prec s'$ implies that $s' \prec s$ is false. It is implied by irreflexivity and transitivity~\cite{flavska2007transitive}.}
	\end{itemize}
\end{definition}

\begin{definition}[Directed graph]
	A \emph{directed graph} $G \in \mathcal{U}_G$ is a tuple $(V, E)$ where $V$ is the set of vertices and $E \subseteq V \times V$ is the set of directed edges. The set $\mathcal{U}_G$ is the \emph{graph universe}. A \emph{path} in a directed graph $G = (V, E)$ is a sequence of vertices $p$ such that for all $1<i<|p|-1$ we have that $(p_i, p_{i+1}) \in E$. We denote with $P_G$ the set of all such possible paths over the graph G. Given two vertices $v, v' \in V$, we denote with $p_G(v, v')$ the set of all paths beginning in $v$ and ending in $v'$: $p_G(v, v') = \{p \in P_G \mid p[1] = v \wedge p[|p|] = v'\}$. $v$ and $v'$ are \emph{connected} (and $v'$ is \emph{reachable} from $v$), denoted by $v \overset{G}{\mapsto} v'$, if and only if there exists a path between them in $G$: $p_G(v, v') \neq \emptyset$. Conversely, $v \overset{G}{\not\mapsto} v' \Leftrightarrow p_G(v, v') = \emptyset$. We drop the superscript $G$ if it is clear from the context. A directed graph $G$ is \emph{acyclic} if there exists no path $p \in P_G$ satisfying $p[1] = p[|p|]$.
\end{definition}

\begin{definition}[Topological sorting]
	Let $G = (V, E)$ be an acyclic directed graph. A \emph{topological sorting~\cite{kalvin1983generation}} $o_G = \langle v_1, v_2, \dots, v_{|V|} \rangle \in \mathcal{S}_V$ is a permutation of the vertices of $G$ such that for all $1 \leq i < j \leq |V|$ we have that $v_j \not\mapsto v_i$. We denote with $\mathcal{O}_G \subseteq \mathcal{S}_V$ all such possible topological sortings over $G$.
\end{definition}

\begin{definition}[Transitive reduction]
	A \emph{transitive reduction~\cite{aho1972transitive}} $\rho \colon \mathcal{G} \to \mathcal{G}$ of a graph $G = (V, E)$ is a graph $\rho(G) = (V, E_r)$ with $E_r \subseteq E$ where every pair of vertices connected in $\rho(G)$ is not connected by any other path: for all $(v, v') \in E_r$, $p_G(v, v') = \{\langle v, v' \rangle\}$. $\rho(G)$ is the graph with the minimal number of edges that maintain the reachability between edges of $G$. The transitive reduction of a directed acyclic graph always exists and is unique\emph{~\cite{aho1972transitive}}.
\end{definition}

This paper proposes an analysis technique on \emph{uncertain event logs}. These execution logs contain information about uncertainty explicitly associated with event data. A taxonomy of different types of uncertain event logs and attribute uncertainty has been described in~\cite{pegoraro2019mining}; we will refer to the notion of \emph{simple uncertainty}, which includes uncertainty without probabilistic information on the control-flow perspective: activities, timestamps, and indeterminate events.

\begin{definition}[Universes]
	Let $\mathcal{U}_I$ be the set of all the \emph{event identifiers}. Let $\mathcal{U}_C$ be the set of all \emph{case ID identifiers}. Let $\mathcal{U}_A$ be the set of all the \emph{activity identifiers}. Let $\mathcal{U}_T$ be the totally ordered set of all the \emph{timestamp identifiers}. Let $\mathcal{U}_O = \{!, ?\}$, where the ``!'' symbol denotes \emph{determinate events}, and the ``?'' symbol denotes \emph{indeterminate events}.
\end{definition}

\begin{definition}[Simple uncertain events]
	$e = (e_i, A, t_{\text{min}}, t_{\text{max}}, o)$ is a simple uncertain event, where $e_i \in \mathcal{U}_E$ is its event identifier, $A \in \mathcal{P}_{NE}(\mathcal{U}_A$ is the set of possible activity labels for $e$, $t_{\text{min}}$ and $t_{\text{max}}$ are the lower and upper bounds for the value of its timestamp, and $o$ indicates if it is an indeterminate event. Let $\mathcal{U}_E = (\mathcal{U}_I \times \mathcal{P}_{NE}(\mathcal{U}_A) \times \mathcal{U}_T \times \mathcal{U}_T \times \mathcal{U}_O)$ be the set of all simple uncertain events. Over the uncertain event $e = (e_i, A, t_{\text{min}}, t_{\text{max}}, o)$ we define the projection functions $\pi_a(e) = A$, $\pi_{t_{\text{min}}}(e) = t_{\text{min}}$, $\pi_{t_{\text{max}}}(e) = t_{\text{max}}$ and $\pi_o(e) = o$.
\end{definition}

\begin{definition}[Simple uncertain traces and logs]
	$\sigma \subseteq \mathcal{U}_E$ is a \emph{simple uncertain trace} if for any $(e_i, A, t_{\text{min}}, t_{\text{max}}, o) \in \sigma$, $t_{\text{min}} < t_{\text{max}}$ and all the event identifiers are unique. $\mathcal{T}_U$ denotes the universe of simple uncertain traces. $L \subseteq \mathcal{T}_U$ is a \emph{simple uncertain log} if all the event identifiers in the log are unique. 
\end{definition}

\begin{definition}[Strict partial order over simple uncertain events]\label{def:ord}
	Let $e, e' \in \mathcal{E}_U^S$ be two simple uncertain events. $(\prec, \mathcal{E}_U^S)$ is an order defined on the universe of strongly uncertain events $\mathcal{E}_U^S$ as:
	\[
	e \prec e' \Leftrightarrow \pi_{t_{max}}(e) < \pi_{t_{min}}(e')
	\]
\end{definition}

\begin{definition}[Order-realizations of simple uncertain traces]\label{def:real}
	Let $\sigma \in \mathcal{T}_U$ be a simple uncertain trace. An \emph{order-realization} $\sigma_O = \langle e_1, e_2, \dots, e_{|\sigma|} \rangle \in \mathcal{S}_\sigma$ is a permutation of the events in $\sigma$ such that for all $1 \leq i < j \leq |\sigma|$ we have that $e_j \nprec e_i$, i.e. $\sigma_O$ is a correct evaluation order for $\sigma$ over $(\prec, \mathcal{E}_U^S)$, and the (total) order in which events are sorted in $\sigma_O$ is a linear extension of the strict partial order $(\prec, \mathcal{E}_U^S)$. We denote with $\mathcal{R}_O(\sigma)$ the set of all such order-realizations of the trace $\sigma$.
\end{definition}

A necessary step to allow for analysis of simple uncertain traces is to obtain their \emph{behavior graph}. A behavior graph is a directed acyclic graph that synthesizes the information regarding the uncertainty on timestamps contained in the trace.

\begin{definition}[Behavior graph]\label{def:bg}
	Let $\sigma \in \mathcal{T}_U$ be a simple uncertain trace. Let the identification function $id \colon \sigma \to \{1, 2, \dots, |\sigma|\}$ be a bijection between the events in $\sigma$ and the first $|\sigma|$ natural numbers. A behavior graph $\beta \colon \mathcal{T}_U \to \mathcal{U}_G$ is the transitive reduction of a directed graph $\rho(G)$, where $G = (V, E) \in \mathcal{U}_G$ is defined~\footnote{A technical note: this definition for the nodes of the behavior graph is slightly different from the one in~\cite{pegoraro2019mining}, to simplify the notation in algorithms. The two definitions are functionally identical. } as:
	\begin{itemize}
		\item $V = \{(\text{id}(e), \pi_a(e), \pi_o(e)) \mid e \in \sigma \}$
		\item $E = \{(v, w) \mid v, w \in V \wedge \pi_{t_{\text{max}}}(v) < \pi_{t_{\text{min}}}(w)\}$
	\end{itemize}
	The set of topological sortings of a behavior graph $\beta(\sigma)$ corresponds to the set of all the order-realizations of the trace $\sigma$: 
\end{definition}

Figures~\ref{fig:graphcomp} and~\ref{fig:graphred} show the transitive reduction operation on the running example.

\begin{figure}
	\centering
	\begin{minipage}[t]{0.48\textwidth}
		\centering
		\begin{tikzpicture}[->, node distance=3.5cm, nodes={draw, ellipse, scale=.65}]
		
		\node[dashed]	(A)	[label=below:$e_1$]								{$\text{NightSweats}$};
		\node			(B)	[right of=A, label=below:$e_2$]					{$\{\text{PrTP, SecTP}\}$};
		\node			(C)	[below of=B, yshift=1.5cm, label=below:$e_3$]	{$\text{Splenomeg}$};
		\node			(D)	[right of=C, yshift=1cm, label=below:$e_4$]		{$\text{Adm}$};
		
		\path
		(A) edge (B)
		edge [bend left] (D)
		(B) edge (D)
		(C) edge (D);
		\end{tikzpicture}
		\caption{The behavior graph of the trace in Table~\ref{table:uncertaintrace} before applying the transitive reduction. All the nodes in the graph are pairwise connected based on precedence relationships; pairs of nodes for which the order is unknown are not connected.}
		\label{fig:graphcomp}
	\end{minipage}\hfill
	\begin{minipage}[t]{0.48\textwidth}
		\centering
		\begin{tikzpicture}[->, node distance=3.5cm, nodes={draw, ellipse, scale=.65}]
		
		\node[dashed]	(A)	[label=below:$e_1$]								{$\text{NightSweats}$};
		\node			(B)	[right of=A, label=below:$e_2$]					{$\{\text{PrTP, SecTP}\}$};
		\node			(C)	[below of=B, yshift=1.5cm, label=below:$e_3$]	{$\text{Splenomeg}$};
		\node			(D)	[right of=C, yshift=1cm, label=below:$e_4$]		{$\text{Adm}$};
		
		\path
		(A) edge (B)
		(B) edge (D)
		(C) edge (D);
		\end{tikzpicture}
		\caption{The same behavior graph after the transitive reduction. The arc between $e_1$ and $e_4$ is removed, since they are reachable through $e_2$. This graph has a minimal number of arcs while conserving the same reachability relationship between nodes.}
		\label{fig:graphred}
	\end{minipage}
\end{figure}

The semantics of a behavior graph are able to efficaciously communicate time and order information concerning the time relationships among events in the corresponding uncertain trace in a compact manner. For a behavior graph $\beta(\sigma) = (V, E)$ and two events $e_1 \in \sigma$, $e_2 \in \sigma$, $(e_1, e_2) \in E$ holds if and only if $e_1$ is immediately followed by $e_2$ for some possible values of the timestamps of the events in the trace. A consequence of this fact is that if a pair of events in the graph are unreachable, they might have occurred in any order.

Definition~\ref{def:bg} is meaningful and clear from a theoretical point of view. It rigorously defines a behavior graph and the semantics of its parts. While helpful to understand the function of behavior graphs, obtaining them from process traces following this definition -- that is, utilizing the transitive reduction -- is inefficient and slow. This hinders the analysis of logs with a large number of events, and with longer traces. It is nonetheless possible to build behavior graphs from process traces in a faster and more efficient way.

\subsection{Efficient Construction of Behavior Graphs}\label{sec:bg}
The set of steps to efficiently create a behavior graph from an uncertain trace is separated into two distinct phases, described by Algorithms~\ref{alg:list} and~\ref{alg:bg}. An uncertain event $e$ is associated with a time interval which is determined by two values: minimum and maximum timestamp of that event $\pi_{t_{\text{min}}}(e)$ and $\pi_{t_{\text{max}}}(e)$. If an event $e$ has a certain timestamp, we have that $\pi_{t_{\text{min}}}(e) = \pi_{t_{\text{max}}}(e)$.

\begin{algorithm}[]
	\caption{\textsc{TimestampList($\sigma$)}}
	\label{alg:list}
	\SetKwInOut{Input}{Input~}
	\SetKwInOut{Output}{Output~}
	\Input{~An uncertain trace $\sigma$.}
	\Output{~The list of timestamps $\mathcal{L}$ of $\sigma$.}
	
	$\mathcal{L}^* \gets \langle~\rangle$ \tcp*{Support list}
	
	$\mathcal{L} \gets \langle~\rangle$ \tcp*{List of event attributes}
	
	$\mathbb{E} \gets \textsc{Sort}(\sigma)$  \tcp*{Sorts uncertain events by minimum timestamp}
	
	$i \gets 1$
	
	\While{$i \leq |\mathbb{E}|$}{
		$\mathcal{L}^* \gets \mathcal{L}^* \oplus (\pi_{t_{\text{min}}}(e), i, e, \text{'MIN'})$
		
		$\mathcal{L}^* \gets \mathcal{L}^* \oplus (\pi_{t_{\text{max}}}(e), i, e, \text{'MAX'})$
		
		$i \gets i + 1$
	}
	
	$\textsc{Sort}(\mathcal{L}^*)$ \tcp*{Sorts the list based on timestamp value}
	
	$i \gets 1$
	
	\While{$i \leq |\mathcal{L}^*|$}{
		$(t, id, e, \textit{type}) \gets \mathcal{L}^*[i]$
		
		$\mathcal{L} \gets \mathcal{L} \oplus (id, \pi_a(e), \pi_o(e), \textit{type})$
		
		$i \gets i + 1$
	}
	
	\Return $\mathcal{L}$
\end{algorithm}

\begin{algorithm}[]
	\caption{\textsc{BehaviorGraph(TimestampList($\sigma$))}}
	\label{alg:bg}
	\SetKwInOut{Input}{Input~}
	\SetKwInOut{Output}{Output~}
	\Input{~The list $\mathcal{L} = \textsc{TimestampList}(\sigma)$ of an uncertain trace $\sigma$.}
	\Output{~The behavior graph $\beta(\sigma) = (V, E)$.}
	
	$V \gets \{(id, \pi_a(e), \pi_o(e)) \mid (id, \pi_a(e), \pi_o(e), \textit{type}) \in \mathcal{L} \}$ 
	
	$E \gets \varnothing$ 
	
	$i \gets 1$
	
	\While{$i < |\mathcal{L}|$}{
		$(id, a, o, \textit{type}) \gets \mathcal{L}[i]$
		
		\If{$\textit{type} = \text{'MAX'}$}{
			$j \gets i + 1$
			
			\While{$j \leq |\mathcal{L}|$}{
				$(id^*, a^*, o^*, \textit{type}^*) \gets \mathcal{L}[j]$
				
				\uIf {$\textit{type}^* = \text{'MIN'}$}{
					$E \gets E \cup \{((id, a, o), (id^*, a^*, o^*))\}$
				}
				\ElseIf {$((id, a, o), (id^*, a^*, o^*)) \in E$}{
					\Break
				}
				$j \gets j + 1$
			}
		}
		$i \gets i + 1$
	}
	\Return $(V, E)$
\end{algorithm}

\begin{table}[]
	\caption{Running example for the creation of the behavior graph.}
	\label{table:running}
	\centering
	\begin{tabular}{ccccc}
		\textbf{Case ID} & \textbf{Event ID}        & \textbf{Activity}                                                                                                     & \textbf{Timestamp}             & \multicolumn{1}{l}{\textbf{Event Type}} \\ \hline
		\multicolumn{1}{|c|}{872} & \multicolumn{1}{|c|}{$e_1$} &
		\multicolumn{1}{c|}{a} & \multicolumn{1}{c|}{\begin{tabular}[c]{@{}c@{}} 05-12-2011\end{tabular}}                                                                                 & \multicolumn{1}{c|}{!}                    \\ \hline
		\multicolumn{1}{|c|}{872} & \multicolumn{1}{|c|}{$e_2$} &
		\multicolumn{1}{c|}{b} & \multicolumn{1}{c|}{\begin{tabular}[c]{@{}c@{}}[06-12-2011, 10-12-2011]\end{tabular}}                                                                         &  \multicolumn{1}{c|}{!}                    \\ \hline
		\multicolumn{1}{|c|}{872} & \multicolumn{1}{|c|}{$e_3$} &
		\multicolumn{1}{c|}{c}        & \multicolumn{1}{c|}{07-12-2011} &  \multicolumn{1}{c|}{!}                    \\ \hline
		\multicolumn{1}{|c|}{872} & \multicolumn{1}{|c|}{$e_4$} &
		\multicolumn{1}{c|}{d} & \multicolumn{1}{c|}{\begin{tabular}[c]{@{}c@{}}[08-12-2011, 11-12-2011]\end{tabular}}                                                                         &  \multicolumn{1}{c|}{!}                    \\ \hline
		\multicolumn{1}{|c|}{872} & \multicolumn{1}{|c|}{$e_5$} &
		\multicolumn{1}{c|}{e}        & \multicolumn{1}{c|}{\begin{tabular}[c]{@{}c@{}}09-12-2011 \end{tabular}}                                                                         &  \multicolumn{1}{c|}{!}                    \\ \hline
		\multicolumn{1}{|c|}{872} & \multicolumn{1}{|c|}{$e_6$} &
		\multicolumn{1}{c|}{f}        & \multicolumn{1}{c|}{[12-12-2011, 13-12-2011]}                                                                         &  \multicolumn{1}{c|}{!}                    \\ \hline
	\end{tabular}
\end{table}

We will examine here the effect of Algorithms~\ref{alg:list} and~\ref{alg:bg} on a running example, the process trace shown in Table~\ref{table:running}. Notice that, in this running example, no uncertainty on activity labels nor indeterminate events are present: this is because of the fact that the topology of a behavior graph only depends on the (uncertain) timestamps in the events belonging to the corresponding trace.

The construction of the graph relies on a preprocessing step shown in Algorithm~\ref{alg:list}, where a support list $\mathcal{L}$ is created (lines 4-8). Every entry in this list is a tuple of four elements. For each event $e$ in the trace, we insert two entries in the list -- one for each timestamp $\pi_{t_{\text{min}}}$ and $\pi_{t_{\text{max}}}$ appearing in a trace. The four elements in each tuple contained in the list are:
\begin{itemize}
	\item an \emph{identifier}, which in the list construction is an integer representing the rank of the uncertain event by minimum timestamp (computed in line 3);
	\item the activity labels associated with the event $\pi_a(e)$;
	\item the attribute $\pi_o(e)$, which will carry the information regarding indeterminate events;
	\item the type of timestamp that generated this entry -- if it is a minimum or maximum of an interval.
\end{itemize}
As we can see, the list is designed to contain all information about an uncertain event except the values of minimum and maximum timestamps, which we use to sort the list (line 9) and then discard prior to returning the list (lines 10-15).

\begin{table}[]
	\caption{Entries for the list $\mathcal{L}$ generated by each event in the uncertain trace. Every event $e$ has two associated entries, one marked as $\text{'MIN'}$ and the other as $\text{'MAX'}$. Each entry is a 4-uple containing an integer that acts as event identifier, the set of possible activity labels $\pi_a(e)$ of the uncertain event, the indeterminate event attribute $\pi_o(e)$, and the type of timestamp ($\text{'MIN'}$ or $\text{'MAX'}$).}
	\label{table:entries}
	\centering
	\begin{tabular}{ccc}
		\multicolumn{1}{c}{\textbf{Event}}  & \multicolumn{1}{c}{\begin{tabular}[c]{@{}c@{}}\textbf{List $\mathcal{L}^*$ entry}\\ \textbf{(minimum timestamp)}\end{tabular}} & \multicolumn{1}{c}{\begin{tabular}[c]{@{}c@{}}\textbf{List $\mathcal{L}^*$ entry}\\ \textbf{(maximum timestamp)}\end{tabular}} \\ \hline
		\multicolumn{1}{|c|}{$e_1$} & \multicolumn{1}{c|}{(05-12-2011, 1, $\{a\}$, !, \text{'MIN'})}                                                                       & \multicolumn{1}{c|}{(05-12-2011, 1, $\{a\}$, !, \text{'MAX'})}                                                                       \\ \hline
		\multicolumn{1}{|c|}{$e_2$} & \multicolumn{1}{c|}{(06-12-2011, 2, $\{b\}$, !, \text{'MIN'})}                                                                       & \multicolumn{1}{c|}{(10-12-2011, 2, $\{b\}$, !, \text{'MAX'})}                                                                       \\ \hline
		\multicolumn{1}{|c|}{$e_3$} & \multicolumn{1}{c|}{(07-12-2011, 3, $\{c\}$, !, \text{'MIN'})}                                                                       & \multicolumn{1}{c|}{(07-12-2011, 3, $\{c\}$, !, \text{'MAX'})}                                                                       \\ \hline
		\multicolumn{1}{|c|}{$e_4$} & \multicolumn{1}{c|}{(08-12-2011, 4, $\{d\}$, !, \text{'MIN'})}                                                                       & \multicolumn{1}{c|}{(08-12-2011, 4, $\{d\}$, !, \text{'MAX'})}                                                                       \\ \hline
		\multicolumn{1}{|c|}{$e_5$} & \multicolumn{1}{c|}{(09-12-2011, 5, $\{e\}$, !, \text{'MIN'})}                                                                       & \multicolumn{1}{c|}{(09-12-2011, 5, $\{e\}$, !, \text{'MAX'})}                                                                       \\ \hline
		\multicolumn{1}{|c|}{$e_6$} & \multicolumn{1}{c|}{(12-12-2011, 6, $\{f\}$, !, \text{'MIN'})}                                                                       & \multicolumn{1}{c|}{(13-12-2011, 6, $\{f\}$, !, \text{'MAX'})}                                                                       \\ \hline
	\end{tabular}
\end{table}

The events of the trace in Table~\ref{table:running} are represented in the list $\mathcal{L}^*$ by entries shown in Table~\ref{table:entries}. These entries are then sorted by Algorithm~\ref{alg:list} yielding the following list $\mathcal{L}$:
\[
\mathcal{L} = \langle (1, \{a\}, !, \text{'MIN'}), (1, \{a\}, !, \text{'MAX'}), (2, \{b\}, !, \text{'MIN'}), (3, \{c\}, !, \text{'MIN'}), 
\]
\[
(3, \{c\}, !, \text{'MAX'}), (4, \{d\}, !, \text{'MIN'}), (5, \{e\}, !, \text{'MIN'}), (5, \{e\}, !, \text{'MAX'}),
\]
\[
(2, \{b\}, !, \text{'MAX'}), (4, \{d\}, !, \text{'MAX'}), (6, \{f\}, !, \text{'MIN'}), (6, \{f\}, !, \text{'MAX'}) \rangle
\]

One of the purposes the list $\mathcal{L}$ serves is gathering the structural information to create the behavior graph; in fact, visiting the list in order is equivalent of sweeping the events of the trace on the time dimension, encountering each timestamp (minimum or maximum) sorted through time. We can visualize this on the Gantt diagram representation of the trace of Table~\ref{table:running}, visible in Figure~\ref{fig:bg_running_table}.

\begin{figure}
	\centering
	\includegraphics[width=.7\linewidth, keepaspectratio]{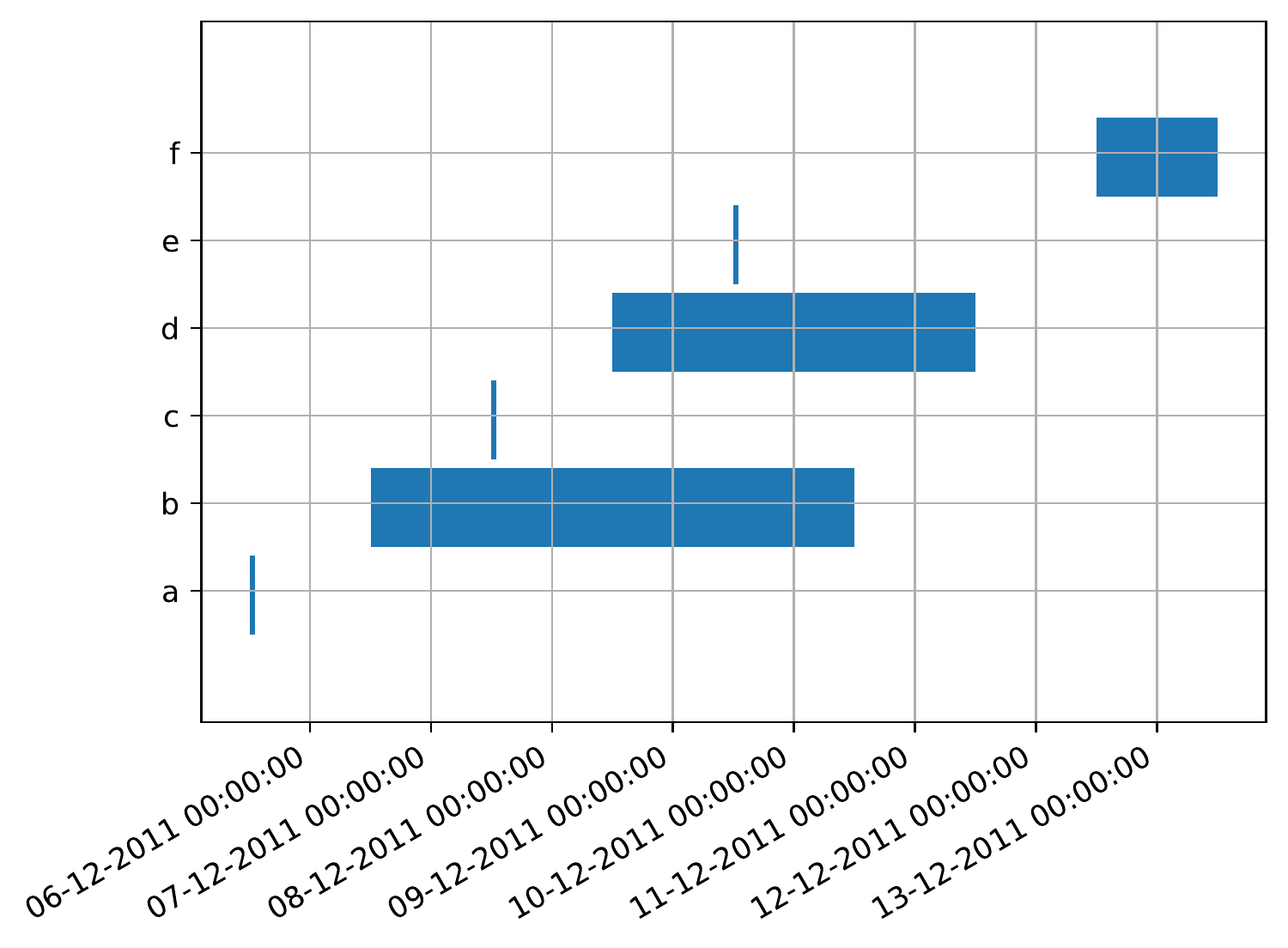}
	\caption{A Gantt diagram visualizing the time perspective of the events in Table~\ref{table:running}. The horizontal blue bars represent the interval of possible timestamps of uncertain events: such interval is ample for the event with activity label ``c'', which has an uncertain timestamp, and is narrow to indicate a precise point in time for the other events. This diagram is able to show the order relationship between events in a trace, as well as the dimensions of their interval of possible timestamps in scale.}
	\label{fig:bg_running_table}
\end{figure}

Every segment representing an uncertain event in the diagram is translated by $\textsc{TimestampList}$ into two entries in a sorted list, representing the two extremes of the segment. Events without an uncertain timestamp collapse into a single point in the diagram, and their corresponding two entries in the list are characterized by the same timestamp.

Now, let us examine Algorithm~\ref{alg:bg}. The idea leading the algorithm is to analyze the time relationship among uncertain events in a more precise manner, as opposed to adding a large number of edges to the graph and then removing them via transitive reduction. This is attained by searching all the viable successors of each event in the sorted timestamp list $\mathcal{L}$. We scan the list $\mathcal{L}$ with two nested loops, and we use the inner loop to look for successors of the entry selected by the outer loop. According to the semantics of behavior graphs, events with overlapping intervals as timestamps must not be connected by a path; thus, we draw outgoing edges from an event only when, reading the list, we arrive at a point in time in which the event has certainly occurred. This is the reason why outgoing edges are not drawn when inspecting minimum timestamps (line 6) and incoming edges are not drawn when inspecting maximum timestamps (line 10).

First, we initialize the set of nodes with all the triples $(id, \pi_a(e), \pi_o(e))$ in the entries of $\mathcal{L}$, and we initialize the edges with an empty set (lines 1-2). For each maximum timestamp that we encounter in the list, we start searching for successors in the following entries (lines 3-9), so we proceed in looking for the successors of $(id, a, o, \textit{type})$ only if $\textit{type} = \text{'MAX'}$.

If, while searching for successors of the entry $(id, a, o, \text{'MAX'})$, we encounter the entry $(id^*, a^*, o^*, \textit{type}^*)$ corresponding to a minimum timestamp ($ \textit{type}^* = \text{'MIN'}$), we connect $(id, a, o)$ and $(id^*, a^*, o^*)$ in the graph, since their timestamps do not have any possible value in common. The search for successors must continue, since it is possible that other events took place before the maximum timestamp of the event corresponding to $(id^*, a^*, o^*, \textit{type}^*)$. This configuration occurs for events $e_1$ and $e_3$ in Table~\ref{table:running}. As can be seen in Figure~\ref{fig:bg_running_table}, $e_3$ can indeed follow $e_1$, but the still undiscovered event $e_2$ is another possible successor for $e_1$.

If the entry $(id^*, a^*, o^*, \textit{type}^*)$ corresponds to a maximum timestamp (line 12), so $\textit{type}^* = \text{'MAX'}$, there are two separate situations to consider. Case 1: $(id, a, o)$ was not already connected to $(id^*, a^*, o^*)$. Then, the timestamps of the events corresponding to $(id, a, o)$ and $(id^*, a^*, o^*)$ overlap with each other -- if they did not, the two nodes would have already been connected, since we would have encountered $(id^*, a^*, o^*, \text{'MIN'})$ from $(id, a, o, \textit{'MAX'})$ before encountering $(id^*, a^*, o^*, \text{'MAX'})$. Thus, $(id, a, o)$ must not be connected to $(id^*, a^*, o^*)$ and the search must continue. Events $e_3$ and $e_4$ are an example: when the maximum timestamp of $e_4$ is encountered during the search for the successor of $e_3$, the two are not connected, so the search for a viable successor of $e_3$ has to continue. Case 2: $(id, a, o)$ and $(id^*, a^*, o^*)$ are already connected. This means that we had already encountered $(id^*, a^*, o^*, \text{'MIN'})$ during the search for the successors of $(id, a, o)$. Since the entire time interval representing the possible timestamp of the event associated with $(id^*, a^*, o^*)$ is detected after the occurrence of $(id, a, o)$, there are no further events to consider as successors of $(id, a, o)$ and the search stops (line 13). In the running example, this happens between $e_5$ and $e_6$: when searching for the successors of $e_5$, we first connect it with $e_6$ when we encounter its minimum timestamp; we then encounter its maximum timestamp, so no other successive event can be a successor for $e_5$.
This concludes the walkthrough of the procedure, which shows why Algorithms~\ref{alg:list} and~\ref{alg:bg} can be used to correctly compute the behavior graph of a trace. The behavior graph of the trace in Table~\ref{table:running} obtained through this procedure is shown in Figure~\ref{fig:bg_running}.

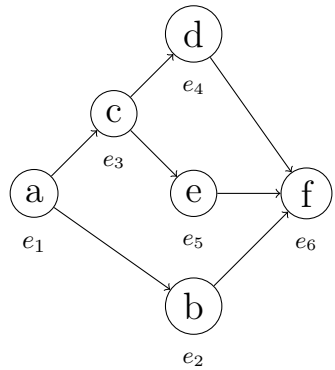
\begin{figure}
	\centering
	\begin{tikzpicture}[->, node distance=1.5cm, nodes={draw, circle}]
	
	\node	(A)	[label=below:\normalsize$e_1$]			{\Large a};
	\node	(B)	[above right of=A, label=below:$e_3$]	{\Large c};
	\node	(D)	[above right of=B, label=below:$e_4$]	{\Large d};
	\node	(E)	[below right of=B, label=below:$e_5$]	{\Large e};
	\node	(C)	[below of=E, label=below:$e_2$]			{\Large b};
	\node	(F)	[right of=E, label=below:$e_6$]			{\Large f};
	
	\path
	(A) edge (B)
	(A) edge (C)
	(B) edge (D)
	(B) edge (E)
	(D) edge (F)
	(E) edge (F)
	(C) edge (F);
	\end{tikzpicture}
	\caption{The behavior graph of the trace in Table~\ref{table:running}.}
	\label{fig:bg_running}
\end{figure}

Let us now prove, in more formal terms, the correctness of these algorithms. We will show that the procedures \textsc{BehaviorGraph} and \textsc{TimestampList} are able to construct a behavior graph with the semantics illustrated in Definition~\ref{def:bg}.

\begin{theorem}[Correctness of the behavior graph construction]
	Let $\sigma \in \mathcal{T}_U$ be an uncertain trace. Let $bg = (V, E) = \textsc{BehaviorGraph}(\textsc{TimestampList}(\sigma))$ be the behavior graph of $\sigma$ obtained through Algorithms~\ref{alg:list} and~\ref{alg:bg}. The graph $bg$ follows the behavior graph semantics: for all pairs of events $e \in \sigma$ and $e' \in \sigma$ such that $id(e) = e_{id}$, $\pi_a(e) = e_a$, $\pi_o(e) = e_o$, $id(e') = e'_{id}$, $\pi_a(e') = e'_a$, $\pi_o(e') = e'_o$, we have that the node $(e_{id}, e_a, e_o)$ is connected to the node $(e'_{id}, e'_a, e'_o)$ if and only if $\pi_{t_{\text{max}}}(e) < \pi_{t_{\text{min}}}(e')$ and there exists no event $e'' \in \sigma$ such that $\pi_{t_{\text{max}}}(e) < \pi_{t_{\text{min}}}(e'') \leq \pi_{t_{\text{max}}}(e'') < \pi_{t_{\text{min}}}(e')$. Thus, $bg = \beta(\sigma)$.
\end{theorem}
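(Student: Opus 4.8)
The plan is to split the statement into two halves and prove them in sequence. First I would show that the graph produced by \textsc{BehaviorGraph}$(\textsc{TimestampList}(\sigma))$ contains an edge from $(e_{id},e_a,e_o)$ to $(e'_{id},e'_a,e'_o)$ exactly when the two stated conditions hold --- call them $e \prec e'$ (the strict precedence $\pi_{t_{\text{max}}}(e) < \pi_{t_{\text{min}}}(e')$) and the ``no intermediate event'' condition. Second, I would show that this characterisation coincides with the edge set of $\beta(\sigma)$, i.e.\ the transitive reduction $\rho(G)$, which yields the final identity $bg = \beta(\sigma)$.

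Before either half, I would record the one structural fact that makes the algorithm legible: the sorted list $\mathcal{L}$ is a sweep over the time axis, so the relative position of two endpoints in $\mathcal{L}$ mirrors the numerical order of the corresponding timestamps. Concretely, I would prove a small lemma: the MIN-entry of $e'$ follows the MAX-entry of $e$ in $\mathcal{L}$ if and only if $\pi_{t_{\text{max}}}(e) < \pi_{t_{\text{min}}}(e')$. This is where the tie-breaking of \textsc{Sort} must be pinned down: equal timestamps have to be ordered with MIN before MAX, both so that an event's own MIN precedes its own MAX, and so that the boundary case $\pi_{t_{\text{max}}}(e)=\pi_{t_{\text{min}}}(e')$ --- which must \emph{not} create an edge, the inequality being strict --- is handled correctly. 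I would state this convention explicitly and use it throughout.

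With the lemma in hand, the algorithmic half is a direct reading of the two nested loops. I would first note that outgoing edges of a fixed event $e$ are added only while the outer loop sits on the unique MAX-entry of $e$, so all edges with source $e$ are created during a single inner scan. For the forward direction, an edge $(e,e')$ is added only upon meeting the MIN-entry of $e'$ during that scan, which by the lemma gives $e \prec e'$; and if some $e''$ satisfied $e \prec e'' \prec e'$, both its MIN- and MAX-entries would lie strictly between the MAX-entry of $e$ and the MIN-entry of $e'$, so the scan would first add $(e,e'')$ and then hit the MAX-entry of $e''$ with $(e,e'')$ already present, triggering a break before $e'$ is ever reached --- a contradiction, so the no-intermediate condition must hold. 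For the converse, assuming both conditions, the MAX-entry of $e$ precedes the MIN-entry of $e'$, so the inner scan advances toward it; I would argue no premature break can occur, since a break at a MAX-entry of some $e''$ requires $(e,e'')$ to have been added earlier in the \emph{same} scan (forcing $e \prec e''$) and the break position to precede the MIN-entry of $e'$ (forcing $e'' \prec e'$), contradicting the no-intermediate assumption. Hence the scan reaches the MIN-entry of $e'$ and adds the edge.

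Finally, for the identification with $\beta(\sigma)$ I would observe that the relation $\prec$ of Definition~\ref{def:ord} is a strict partial order (irreflexivity from $\pi_{t_{\text{min}}}(e) \le \pi_{t_{\text{max}}}(e)$, transitivity from the chain $\pi_{t_{\text{max}}}(e) < \pi_{t_{\text{min}}}(e') \le \pi_{t_{\text{max}}}(e') < \pi_{t_{\text{min}}}(e'')$), so the graph $G$ of Definition~\ref{def:bg}, whose edges are exactly the pairs with $v \prec w$, is acyclic and transitively closed. For a transitively closed acyclic graph, the transitive reduction, as defined in the excerpt through the requirement $p_G(v,v')=\{\langle v,v'\rangle\}$, retains an edge precisely when no vertex lies strictly between its endpoints --- i.e.\ exactly the covering pairs of $\prec$, which are exactly the pairs meeting the two conditions of the theorem. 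Since the vertex sets of $bg$ and $\beta(\sigma)$ plainly agree, this gives $E = E_r$ and therefore $bg = \beta(\sigma)$. I expect the main obstacle to be the bookkeeping around the break: one must argue carefully that ``$(e,e'')$ already in $E$'' can only refer to edges added earlier in the current inner scan, and that the interplay of MIN/MAX orderings at equal timestamps never spuriously creates or suppresses a break --- pinning down the tie-breaking convention is precisely what makes this rigorous.
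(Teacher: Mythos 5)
Your proposal is correct and follows the same core strategy as the paper's proof: read the sorted list $\mathcal{L}$ as a sweep over the time axis, observe that all outgoing edges of an event $e$ are created during the single inner scan launched from its MAX-entry, and characterise when the \texttt{break} fires. Where you differ is in completeness rather than in method, and the differences are worth noting. The paper proves only two implications: that an edge forces $\pi_{t_{\text{max}}}(e) < \pi_{t_{\text{min}}}(e')$, and that this inequality together with the absence of an intermediate event forces the edge (by showing a \texttt{break} can only be caused by such an intermediate). It never explicitly proves the remaining half of the ``only if'' direction --- that an edge rules out an intermediate event --- nor does it argue that the resulting characterisation coincides with the transitive reduction $\rho(G)$ of Definition~\ref{def:bg}; the concluding identity $bg=\beta(\sigma)$ is simply asserted. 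Your proposal supplies both: the contradiction argument showing that an intermediate $e''$ would trigger a \texttt{break} before $\mathrm{MIN}(e')$ is reached, and the observation that $G$ is transitively closed and acyclic so that $\rho(G)$ retains exactly the covering pairs of $\prec$, which are exactly the pairs satisfying the two stated conditions. You also pin down the tie-breaking convention of \textsc{Sort} at equal timestamps (MIN before MAX), which the paper leaves implicit even though the boundary case $\pi_{t_{\text{max}}}(e)=\pi_{t_{\text{min}}}(e')$ depends on it, and you correctly flag that ``already in $E$'' at line 12 can only refer to edges added earlier in the current scan because each event has a unique MAX-entry. In short, same route, but your version closes the gaps the published proof leaves open.
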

\begin{proof}
	Let us first define a suitable $id$ function for the behavior graph utilizing the list $\mathbb{E}$ created in $\textsc{TimestampList}(\sigma)$. For all events $e^* \in \sigma$ and for $i \in \mathbb{N}$ such that $\mathbb{E}[i] = e^*$, we define $id(e^*) = i$. Since $id$ is just an enumeration of the events in $\sigma$, it is trivially bijective.
	
	\leavevmode \\
	$(\Leftarrow)$
	Assume $\pi_{t_{\text{max}}}(e) < \pi_{t_{\text{min}}}(e')$. By construction, we have that\\$\mathcal{L} = \langle \dots, (e_{id}, e_a, e_o, \text{'MAX'}), \dots, (e'_{id}, e'_a, e'_o, \text{'MIN'}), \dots \rangle$. The checks in line 6 and line 10 only allow for edges to be linked from entries of type $\text{'MAX'}$ to entries of type $\text{'MIN'}$ that only appear in a later position in the list $\mathcal{L}$. Thus, the configuration $\pi_{t_{\text{max}}}(e) < \pi_{t_{\text{min}}}(e')$ is a strict prerequisite for $(e_{id}, e_a, e_o)$ and $(e'_{id}, e'_a, e'_o)$ to be connected: $((e_{id}, e_a, e_o), (e'_{id}, e'_a, e'_o)) \in E \Rightarrow \pi_{t_{\text{max}}}(e) < \pi_{t_{\text{min}}}(e')$.
	
	\leavevmode \\
	$(\Rightarrow)$
	Assume $\pi_{t_{\text{max}}}(e) < \pi_{t_{\text{min}}}(e')$, and that the algorithm is currently searching the successors for the entry $(e_{id}, e_a, e_o, \text{'MAX'})$. Eventually, the inner loop will consider as a successor the entry $(e'_{id}, e'_a, e'_o, \text{'MIN'})$, and since it is of type $\text{'MIN'}$, $(e_{id}, e_a, e_o)$ and $(e'_{id}, e'_a, e'_o)$ will necessarily be connected unless the algorithm executes the \texttt{break} at line 13. To execute it, the algorithm needs to find a list entry $(e''_{id}, e''_a, e''_o, \text{'MAX'})$ such that there already exist an arc between $(e_{id}, e_a, e_o)$ and $(e''_{id}, e''_a, e''_o)$, and this is only possible if $(e''_{id}, e''_a, e''_o, \text{'MIN'})$ has been encountered while searching for successors of $(e_{id}, e_a, e_o)$. This implies that
	\[
	\mathcal{L} = \langle \dots, (e_{id}, e_a, e_o, \text{'MAX'}), \dots, (e''_{id}, e''_a, e''_o, \text{'MIN'}), \dots
	\]
	\[
	\dots, (e''_{id}, e''_a, e''_o, \text{'MAX'}), \dots, (e'_{id}, e'_a, e'_o, \text{'MIN'}), \dots \rangle
	\]
	which, by construction of $\mathcal{L}$, is only possible if there exist some $e'' \in \sigma$ such that $$\pi_{t_{\text{max}}}(e) < \pi_{t_{\text{min}}}(e'') \leq \pi_{t_{\text{max}}}(e'') < \pi_{t_{\text{min}}}(e')$$
	\qed
\end{proof}

As mentioned earlier, the procedure of constructing a behavior graph has been structured in two different algorithms specifically to enable further optimization in processing uncertain process trace. This becomes evident once we consider the problem of converting in behavior graphs all the traces in an event log, as opposed as one single uncertain trace.

Firstly, it is important to notice that different uncertain traces can have the same list $\mathcal{L}$. Similarly to directly-follows relationships in more classical process mining, which can ignore the amount of time in absolute terms elapsed between two consecutive events, specific values of timestamps in an uncertain trace are not necessarily meaningful with respect to the connection in the behavior graph; their order, conversely, is crucial.

This fact enables further optimization at the log level. The construction of the list $\mathcal{L}$ in $\textsc{TimestampList}(\sigma)$ is engineered in a way that allows for computing the behavior graph without direct lookup to the events in the trace. This implies that it is possible to extract a multiset of lists $\mathcal{L}$ from the event log, and to compute the conversion to behavior graph only for the set of lists induced by this multiset. This allows to save computation time in converting an entire event log to behavior graphs; furthermore, it enables a more compact representation of the log in memory, since we only need to store a smaller number of graphs to represent the whole log.

The procedure to efficiently convert an event log into graphs is detailed in Algorithm~\ref{alg:log}.

\begin{algorithm}[]
	\caption{\textsc{ProcessUncertainLog}}
	\label{alg:log}
	\SetKwInOut{Input}{Input~}
	\SetKwInOut{Output}{Output~}
	\Input{~An uncertain log $L$.}
	\Output{~A multiset of behavior graphs $BG$.}
	
	$ML \gets [~]$
	
	$\mathcal{V}_L \gets [~]$
	
	\For{$\sigma \in L$}{
		$\textit{ML} \gets \textit{ML} \uplus [\textsc{TimestampList}(\sigma)]$
	}
	
	\For{$\mathcal{L} \in \textit{ML}$}{
		$\mathcal{V}_L \gets \mathcal{V}_L \uplus [\textsc{BehaviorGraph}(\mathcal{L})^{\textit{ML}(\mathcal{L})}]$
	}
	
	\Return $BG$
\end{algorithm}

These considerations allow us to extend to the uncertain scenario some concepts that are essential in classical process mining. Firstly, we can now derive the definition of \emph{variant}, highly important for preexisting process mining techniques, to uncertain event data.

\begin{definition}[Uncertain variants]
	Let $L \subseteq \mathcal{T}_U$ be a simple uncertain event log. The variants of $L$ denoted by $\mathcal{V}_L$, are the multisets of behavior graphs for the uncertain traces in $L$, and are computed with $\textsc{ProcessUncertainLog}(L)$.
\end{definition}

The computational advantage in representing a log through a multiset of behavior graphs is evident in the procedure described in Algorithm~\ref{alg:bg}. We see that all data necessary to the creation of a behavior graph is contained in the list $\mathcal{L}$, fact that justifies the log representation method illustrated in Algorithm~\ref{alg:log}.

\begin{lemma}
	Two uncertain traces $\sigma_1 \in L$ and $\sigma_2 \in L$ belong to the same variant, and share the same behavior graph, if and only if they result in the same timestamp list $\mathcal{L}$: $\textsc{TimestampList}(\sigma_1) = \textsc{TimestampList}(\sigma_2)$.
\end{lemma}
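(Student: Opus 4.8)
The plan is to prove the biconditional directly, exploiting the fact established earlier (in the Correctness theorem and the surrounding discussion) that the behavior graph produced by \textsc{BehaviorGraph} depends only on the list $\mathcal{L}$ it receives as input. First I would handle the easy direction $(\Leftarrow)$: if $\textsc{TimestampList}(\sigma_1) = \textsc{TimestampList}(\sigma_2) = \mathcal{L}$, then both traces are fed the identical list into \textsc{BehaviorGraph}. Since \textsc{BehaviorGraph} is a deterministic function of its argument $\mathcal{L}$ alone (it reads no data from $\sigma_1$ or $\sigma_2$ directly), it returns the identical graph in both cases, so $\beta(\sigma_1) = \beta(\sigma_2)$ and the two traces lie in the same variant by the definition of uncertain variants. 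This direction is essentially immediate from determinism.

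The harder and more interesting direction is $(\Rightarrow)$, where I must show that sharing a behavior graph forces the lists to coincide. The natural approach is to argue that $\mathcal{L}$ is fully recoverable from $\beta(\sigma)$ together with the construction, i.e. that $\textsc{TimestampList}$ followed by \textsc{BehaviorGraph} is injective on lists up to the information $\mathcal{L}$ actually encodes. Concretely, I would show that the behavior graph determines both the node set and the relative ordering of the \texttt{'MIN'}/\texttt{'MAX'} entries. The node set $V = \{(id, \pi_a(e), \pi_o(e))\}$ is read off directly from the graph's vertices, and by the Correctness theorem the edge set encodes exactly the immediate-precedence relation $\pi_{t_{\max}}(e) < \pi_{t_{\min}}(e')$ (with no event strictly between). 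From the partial order recovered this way — together with the overlap relation, which is precisely the complement of reachability in $\beta(\sigma)$ — one reconstructs the interleaving pattern of minima and maxima in $\mathcal{L}$.

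The main obstacle I anticipate is that this $(\Rightarrow)$ direction is not literally true as a pure statement about graphs: two genuinely different lists $\mathcal{L}$ can in principle yield the same transitively-reduced behavior graph, because $\mathcal{L}$ records a total order on timestamp endpoints that is finer than what the reachability relation of $\beta(\sigma)$ retains. For example, reordering two \texttt{'MAX'} endpoints that both lie strictly before every remaining \texttt{'MIN'}, or permuting endpoints within a mutually-overlapping cluster, can leave the graph unchanged while changing $\mathcal{L}$. Thus I expect the genuine content of the lemma to require reading ``share the same behavior graph'' as ``are assigned the same variant \emph{via} $\textsc{ProcessUncertainLog}$'', so that the statement is really the contrapositive of the easy direction: distinct lists are \emph{grouped separately} by Algorithm~\ref{alg:log} (which keys the multiset $ML$ on $\mathcal{L}$ itself), and hence correspond to distinct variant-representatives. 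In the plan, therefore, I would carry the $(\Rightarrow)$ direction by appealing to the definition of $\mathcal{V}_L$ as the multiset keyed on $\mathcal{L}$: if $\sigma_1$ and $\sigma_2$ do not share a list, then $\textsc{ProcessUncertainLog}$ places their lists as distinct elements of $ML$ and processes them into separate variant entries, so they belong to different variants. I would flag this subtlety explicitly and, if a strictly graph-theoretic converse is wanted, restrict attention to lists in \emph{canonical form} (minima sorted, and maxima sorted within each overlap-closed block) so that the map $\mathcal{L} \mapsto \beta(\sigma)$ becomes genuinely injective, making the two readings coincide.
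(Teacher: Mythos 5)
The paper offers no explicit proof of this lemma: it is presented as an immediate consequence of the preceding remark that ``all data necessary to the creation of a behavior graph is contained in the list $\mathcal{L}$,'' which is precisely your $(\Leftarrow)$ argument from determinism of \textsc{BehaviorGraph}. So that half of your proposal coincides with the paper's (implicit) justification. Where you go beyond the paper is the $(\Rightarrow)$ direction, and your suspicion there is correct: read as a purely graph-theoretic claim, the converse is false. A concrete witness: let $\sigma_1$ consist of two determinate events with activity labels $\{a\}$ and $\{b\}$ and timestamp intervals $[1,4]$ and $[2,3]$ (nested), and let $\sigma_2$ consist of the same two events with intervals $[1,3]$ and $[2,4]$ (crossing). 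In both traces the rank by minimum timestamp is identical, so the node sets $V$ agree, and in both traces the intervals overlap, so $E=\emptyset$; the behavior graphs coincide. Yet the lists differ: $\langle \mathrm{MIN}_1,\mathrm{MIN}_2,\mathrm{MAX}_2,\mathrm{MAX}_1\rangle$ versus $\langle \mathrm{MIN}_1,\mathrm{MIN}_2,\mathrm{MAX}_1,\mathrm{MAX}_2\rangle$. Hence ``same behavior graph implies same list'' is not provable because it is not true, and your repair --- reading ``belong to the same variant'' as ``are keyed to the same $\mathcal{L}$ by \textsc{ProcessUncertainLog}'', under which $(\Rightarrow)$ holds by construction of Algorithm~\ref{alg:log} --- is the only reading that makes the biconditional correct, and is evidently what the authors intend, since that algorithm groups traces by $\mathcal{L}$ before any graph is built. (Note that this sits uneasily with the paper's literal definition of $\mathcal{V}_L$ as a multiset of behavior graphs, under which $\sigma_1$ and $\sigma_2$ above would be merged into a single element; your canonical-form alternative would also resolve that tension, but the paper does not pursue it.) In short: your $(\Leftarrow)$ is the paper's whole argument, and your analysis of $(\Rightarrow)$ exposes a genuine imprecision in the statement rather than a gap in your reasoning.
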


Another central concept in process mining is the so-called \emph{control-flow perspective} of event data. In certain process traces, where timestamps have a total order, events have a single activity label and no event is indeterminate, the control-flow information is represented by a sequence of activity labels sorted by timestamp. Although there are many analysis approaches that also account for other perspectives (e.g. the performance perspective, that considers the duration of events and their distance in time, or the resource perspective, that accounts for the agents that execute the activities), a vast amount of process mining techniques, including most popular algorithms for process discovery and conformance checking, rely only on the control-flow perspective of a process. Analogously, behavior graphs carry over the control-flow information of an uncertain trace: instead of describing the flow of events like their certain counterpart, the behavior graph describes all possible flows of events in the uncertain trace.

\section{Asymptotic Complexity}\label{sec:proofs}
In this section, we will provide some values for the asymptotic complexity of the algorithms seen in this paper.

In a previous paper~\cite{pegoraro2019mining} we introduced the concept of behavior graph for the representation of uncertain event data, together with a method to obtain such graphs. Definition~\ref{def:bg} describes such a baseline method for the creation of the behavior graph consisting of two main parts: the construction of the starting graph and the computation of its transitive reduction. Let us consider an uncertain process trace $\sigma \in \mathcal{T}_U$ with $|\sigma| = n$ events, and the graph $G = (V, E)$ generated in Definition~\ref{def:bg} before the transitive reduction.

The starting graph is created by inspecting the time relationship between every pair of events; this corresponds to checking if an edge exists between each pair of vertices in $G$, which needs $\mathcal{O}(n^2)$ time.

The transitive reduction of graphs can be obtained through many methods. A simple and efficient method to compute the transitive reduction on sparse graphs is to test reachability through a search (either breadth-first or depth-first) from each edge. This method costs $\mathcal{O}(V \cdot E)$ time\footnote{Here, for simplicity, we resort to a widely adopted abuse of notation in asymptotic complexity analysis: we indicate a set instead of its cardinality (e.g., we use $\mathcal{O}(V)$ in place of $\mathcal{O}(|V|)$).}. However, in the initial graph each event $e \in V$ has an inbound arc from each event certainly preceding $e$ and an outbound arc to each event certainly following $e$. Fewer events with overlapping intervals as timestamps of uncertain events imply fewer arcs in $G$; the initial graph $G$ of a trace with no uncertainty has $|E| = \frac{n(n-1)}{2} = \mathcal{O}(V^2)$ edges. Thus, except for rare, very uncertain cases, the graph $G$ is dense.

Aho et al.~\cite{aho1972transitive} presented a technique to compute the transitive reduction in $\mathcal{O}(n^3)$ time, more appropriate in the case of dense graphs, and proved that the transitive reduction has the same computational complexity of the matrix multiplication problem. The problem of matrix multiplication was generally regarded as having an optimal time complexity of $\mathcal{O}(n^3)$, until Volker Strassen presented an algorithm~\cite{strassen1969gaussian} able to multiply matrices in $\mathcal{O}(n^{2.807355})$ time. Subsequent improvements have followed, by Coppersmith and Winograd~\cite{coppersmith1990matrix}, Stothers~\cite{stothers2010complexity} and Williams~\cite{williams2012multiplying}. The asymptotically fastest algorithm known to date has been illustrated by Le Gall~\cite{le2014powers} and has an execution time of $\mathcal{O}(n^{2.3728639})$. However, these faster algorithms are very seldomly used in practice, due to the existence of large constant factors in their computation time that are hidden by the asymptotic notation. Moreover, they have vast memory requirements. The Strassen algorithm is helpful in real-life applications only when applied on very large matrices~\cite{d2005using}, and the Coppersmith-Winograd algorithm and subsequent improvements are more efficient only with inputs so large that they are effectively classified as galactic algorithms~\cite{le2012faster}.

Bearing in mind these considerations, for the vast majority of event logs, the most efficient way to implement the creation of the behavior graph via transitive reduction runs in $\mathcal{O}(n^2) + \mathcal{O}(n^3) = \mathcal{O}(n^3)$ time in the worst-case scenario.

It is straightforward to find upper bounds for the complexity of Algorithms~\ref{alg:list} and~\ref{alg:bg}.

Line 3 of $\textsc{TimestampList}$ require $\mathcal{O}(n \log n)$ to be executed. Lines 5-8 require $\mathcal{O}(n)$ time. Line 9 requires $\mathcal{O}(2n \log(2n)) = \mathcal{O}(n \log n)$ time to be run. Lines 11-14 require $2n = \mathcal{O}(n)$ time to be run. Lines 1-4 and 10 have a constant cost $\mathcal{O}(1)$. Thus, $\textsc{TimestampList}$ has a total asymptotic cost of $\mathcal{O}(1) + 2 \cdot \mathcal{O}(n \log n) + 2 \cdot \mathcal{O}(n) = \mathcal{O}(n \log n)$ in the worst-case scenario.

Let us now examine $\textsc{BehaviorGraph}$. Lines 1-3 and line 11 run in $\mathcal{O}(1)$ time. Lines 11-30 consist of two nested loops over the list $\mathcal{L}$, and we have $|\mathcal{L}| = 2n$, resulting in an asymptotic cost of $\mathcal{O}((2n)^2) = \mathcal{O}(n^2)$. The total running time for the novel construction method is then $\mathcal{O}(1) + \mathcal{O}(n^2) = \mathcal{O}(n^2)$ time in the worst-case scenario.

We can also obtain a lower bound for the complexity in the worst-case scenario by analyzing the possible size of the output. The complete directed bipartite graph with $n$ vertices, usually indicated with $K_{\frac{n}{2},\frac{n}{2}}$, is a DAG that has $(\frac{n}{4})^2 = \mathcal{O}(n^2)$ edges. It is easy to see that the complete bipartite graph fulfills the requirements to be a behavior graph: it is in fact acyclic, and no edge can be removed without changing the reachability of the graph -- namely, it is equivalent to its transitive reduction. We can show that a behavior graph with such a shape exists employing a simple construction: a trace composed by $n$ events with timestamps such that the first $\frac{n}{2}$ events all have overlapping timestamps, the last $\frac{n}{2}$ also all have overlapping timestamps, and the maximum timestamp of each of the first $\frac{n}{2}$ is smaller than the minimum timestamp of each of the last $\frac{n}{2}$ events. The construction, together with an example, is illustrated in Figure~\ref{fig:kk}. Since lines 11-30 of the algorithm build this graph with $\mathcal{O}(n^2)$ edges, the algorithm runs in $\Omega(n^2)$ time, and thus also in $\Theta(n^2)$ time. This also proves the asymptotic optimality of the algorithm: no algorithm to build behavior graphs can run in less than $\Theta(n^2)$ time in the worst-case scenario.

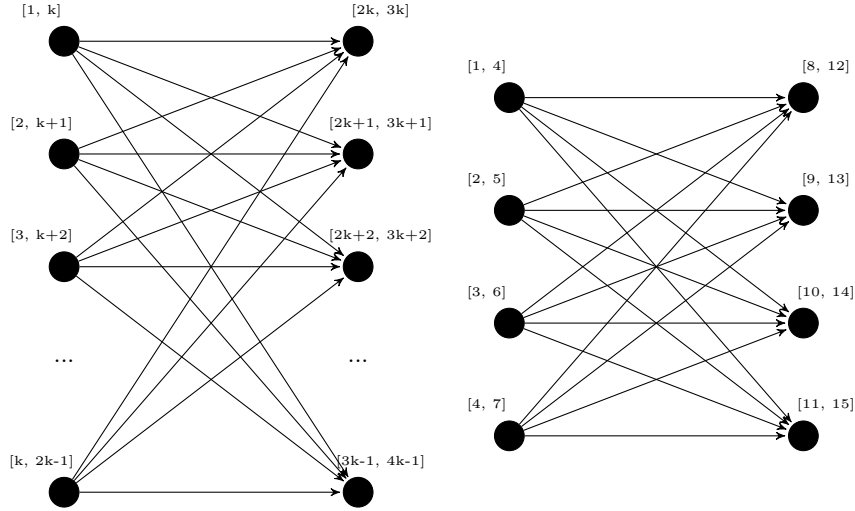
\begin{figure}[h!]
	\centering
	\begin{minipage}{.49\textwidth}
		\begin{tikzpicture}[->,>=stealth',shorten >=1pt,node distance=1.5cm,auto,main node/.style={circle,draw,align=center, fill=black}]
		
		\node[main node, label={[xshift=-0.3cm] \tiny [1, k]}]		(A)	[]				{.};
		\node[main node, label={[xshift=-0.3cm] \tiny [2, k+1]}]	(B)	[below of=A]	{.};
		\node[main node, label={[xshift=-0.3cm] \tiny [3, k+2]}]	(C)	[below of=B]	{.};
		\node[label={...}]											(D)	[below of=C]	{};
		\node[main node, label={[xshift=-0.3cm] \tiny [k, 2k-1]}]	(E)	[below of=D]	{.};
		
		\node[main node, label={[xshift=0.3cm] \tiny [2k, 3k]}]		(F)	[right=3.5cm of A]	{.};
		\node[main node, label={[xshift=0.3cm] \tiny [2k+1, 3k+1]}]	(G)	[below of=F]		{.};
		\node[main node, label={[xshift=0.3cm] \tiny [2k+2, 3k+2]}]	(H)	[below of=G]		{.};
		\node[label={...}]												(I)	[below of=H]		{};
		\node[main node, label={[xshift=0.3cm] \tiny [3k-1, 4k-1]}]	(J)	[below of=I]		{.};
		
		\path
		(A) edge (F)
		(A) edge (G)
		(A) edge (H)
		(A) edge (J)
		(B) edge (F)
		(B) edge (G)
		(B) edge (H)
		(B) edge (J)
		(C) edge (F)
		(C) edge (G)
		(C) edge (H)
		(C) edge (J)
		(E) edge (F)
		(E) edge (G)
		(E) edge (H)
		(E) edge (J)
		;
		\end{tikzpicture}
	\end{minipage}
	\begin{minipage}{.49\textwidth}
		\begin{tikzpicture}[->,>=stealth',shorten >=1pt,node distance=1.5cm,auto,main node/.style={circle,draw,align=center, fill=black}]
		
		\node[main node, label={[xshift=-0.3cm] \tiny [1, 4]}]	(A)	[]				{.};
		\node[main node, label={[xshift=-0.3cm] \tiny [2, 5]}]	(B)	[below of=A]	{.};
		\node[main node, label={[xshift=-0.3cm] \tiny [3, 6]}]	(C)	[below of=B]	{.};
		\node[main node, label={[xshift=-0.3cm] \tiny [4, 7]}]	(D)	[below of=C]	{.};
		
		\node[main node, label={[xshift=0.3cm] \tiny [8, 12]}]		(F)	[right=3.5cm of A]	{.};
		\node[main node, label={[xshift=0.3cm] \tiny [9, 13]}]		(G)	[below of=F]		{.};
		\node[main node, label={[xshift=0.3cm] \tiny [10, 14]}]	(H)	[below of=G]		{.};
		\node[main node, label={[xshift=0.3cm] \tiny [11, 15]}]	(I)	[below of=H]		{.};
		
		\path
		(A) edge (F)
		(A) edge (G)
		(A) edge (H)
		(A) edge (I)
		(B) edge (F)
		(B) edge (G)
		(B) edge (H)
		(B) edge (I)
		(C) edge (F)
		(C) edge (G)
		(C) edge (H)
		(C) edge (I)
		(D) edge (F)
		(D) edge (G)
		(D) edge (H)
		(D) edge (I)
		;
		\end{tikzpicture}
	\end{minipage}
	\caption{Construction of the class of behavior graphs isomorphic to a complete bipartite graph and an instantiated example. For any $n = 2k$, it is possible to have a behavior graph isomorphic to the graph $K_{k,k}$, which thus has a number of edges quadratic in the number of vertices.}
	\label{fig:kk}
\end{figure}

\section{Experimental Results}\label{sec:experiments}
The formal definition of our novel construction method for the behavior graph was used to show its asymptotic speedup with respect to the construction utilizing the transitive reduction. In order to empirically confirm this improvement, we built a set of experiments in order to measure the gain in speed and memory usage.

\subsection{Performance of Behavior Graph Construction}

In this section, we will show a comparison between the running time of the na{\"i}ve behavior graph construction -- which employs the transitive reduction -- versus the improved method detailed throughout the paper. The experiments are set to investigate the difference in performance between the two algorithms, and most importantly how this difference scales when the size of the event log increases, as well as the amount of events in the log that have uncertain timestamps. In designing the experiments, we took into consideration the following research questions:
\begin{itemize}
	\item \emph{Q1}: how does the computation time of the two methods compare when run on logs having an increasing number of traces?
	\item \emph{Q2}: how does the computation time of the two methods compare when run on logs with increasing trace lengths?
	\item \emph{Q3}: how does the computation time of the two methods compare when run on logs with increasing percentages of events with uncertain timestamps?
	\item \emph{Q4}: what degree of reduction in memory consumption for the representation of an uncertain log can we attain with the novel method?
	\item \emph{Q5}: do the answers obtained for \emph{Q3} hold when simulating uncertainty on real-life event data?
\end{itemize}

Both the baseline algorithm based on transitive reduction~\cite{pegoraro2019mining} and the new algorithm for the construction of the behavior graph are implemented in Python, within the PROVED project. The implementation of both methods is available online, as well as the full code for the experiments presented here (see the reference in Section~\ref{sec:introduction}).

For each series of experiments exploring \emph{Q1} through \emph{Q4}, we generate a synthetic event log with a number $n$ of traces of length $l$ (in number of events belonging to the trace). Uncertainty on timestamps is then artificially added to the events in the log. A specific percentage $p$ of the events in the event log will have an uncertain timestamp, causing it to overlap with an adjacent event. Finally, behavior graphs are built from all the traces in the event log with either algorithm, while the execution time is measured. All results in this section are presented as the mean of the measurements for 10 runs of the corresponding experiment. In the diagrams, we will label with ``TrRed'' the na{\"i}ve method using the transitive reduction, and with ``Improved'' the faster algorithm illustrated in this paper. Additionally, the data series for the novel method are labeled with the relative variation in running time for each specific data point in the experiment, expressed in percentage.

To answer \emph{Q1}, the first experiment inspects how the efficiency of the two algorithms scales with log dimension in number of traces. We generate logs with a fixed uncertainty percentage of $p = 0.5$, and trace length of $l = 20$. The number of traces in the uncertain log progressively scales from $n = 1000$ to $n = 10000$. As shown in Figure~\ref{fig:ntraces}, our proposed algorithm outperforms the baseline algorithm, showing a much smaller slope in computation time. As anticipated by the theoretical analysis, the computing time to build behavior graphs increases linearly with the number of traces in the event log for both methods; in the novel method, the constant factors are much smaller, thus producing the speedup that we can observe in the graph. Note that in this experiment the novel method requires between $18\%$ and $26\%$ of the time with respect to the baseline method.

\begin{figure}[H]
	\centering
	\includegraphics[width=.8\linewidth, keepaspectratio]{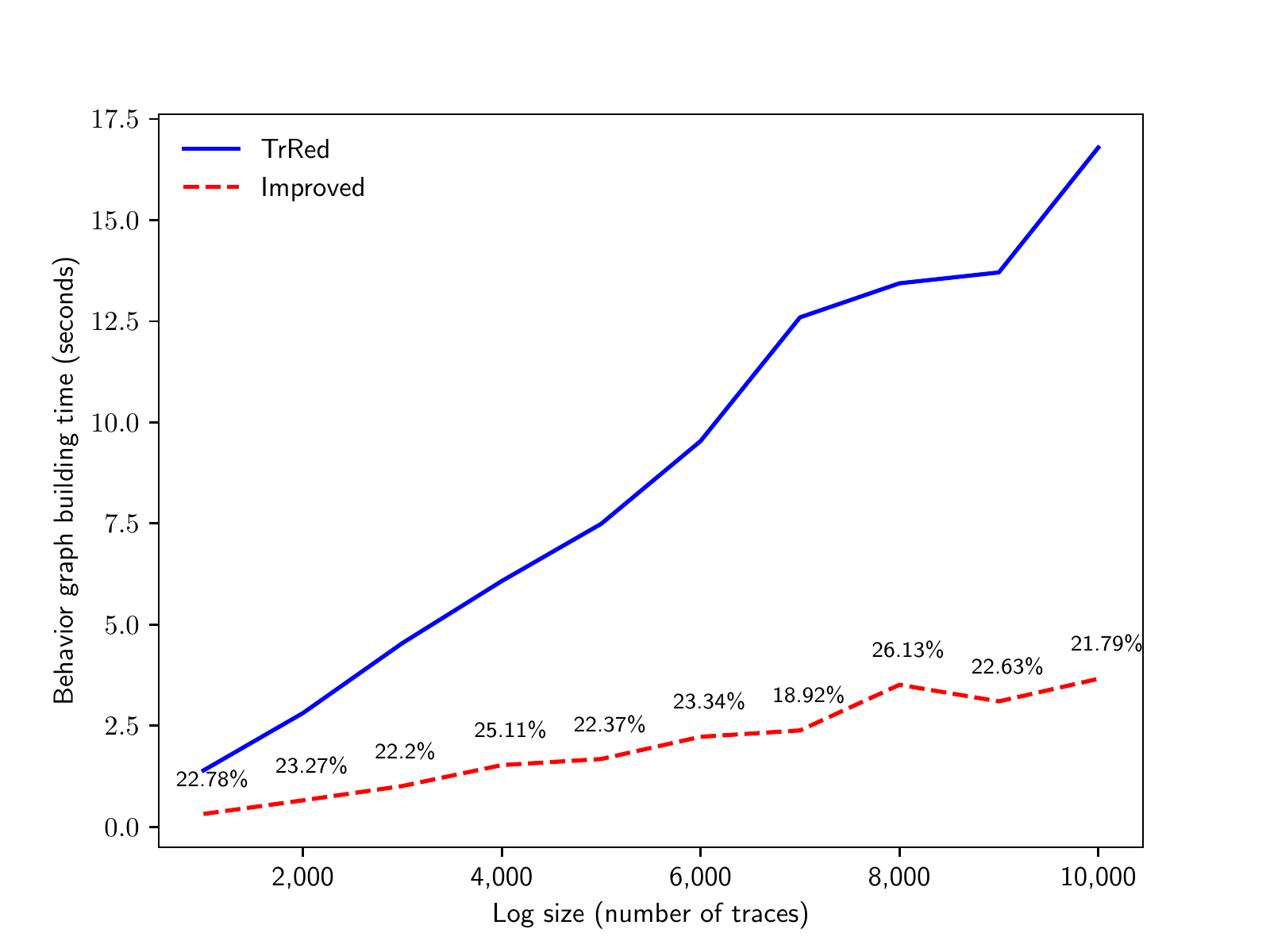}
	\caption{Time in seconds for the creation of the behavior graphs for synthetic logs with traces of length $l = 20$ events and $p = 0.5$ of uncertain events, with increasing number of traces $n$. The solid blue line indicates the time needed for the na{\"i}ve construction; the dashed red line shows the building time of the improved algorithm, and is labeled with the relative time variation (in percentage).}
	\label{fig:ntraces}
\end{figure}

The second experiment is designed to answer \emph{Q2}. We analyze the effect of the trace length on the total time needed for behavior graph creation. Therefore, we created logs with $n = 100$ traces of increasing lengths in number of events, and added uncertain timestamps to events with $p = 0.5$. The results, illustrated by Figure~\ref{fig:lengths}, meet our expectations: the computation time of the baseline method scales much worse than the computation time required by our new technique, due to its cubic asymptotic time complexity. This confirms the results of the analysis of the asymptotic time complexity analysis detailed in Section~\ref{sec:proofs}. We can notice an order-of-magnitude increase in speed. At trace length $l = 600$, the new algorithm computes the graphs in only $0.35\%$ of the time required by the baseline algorithm.

\begin{figure}[H]
	\centering
	\includegraphics[width=.8\linewidth, keepaspectratio]{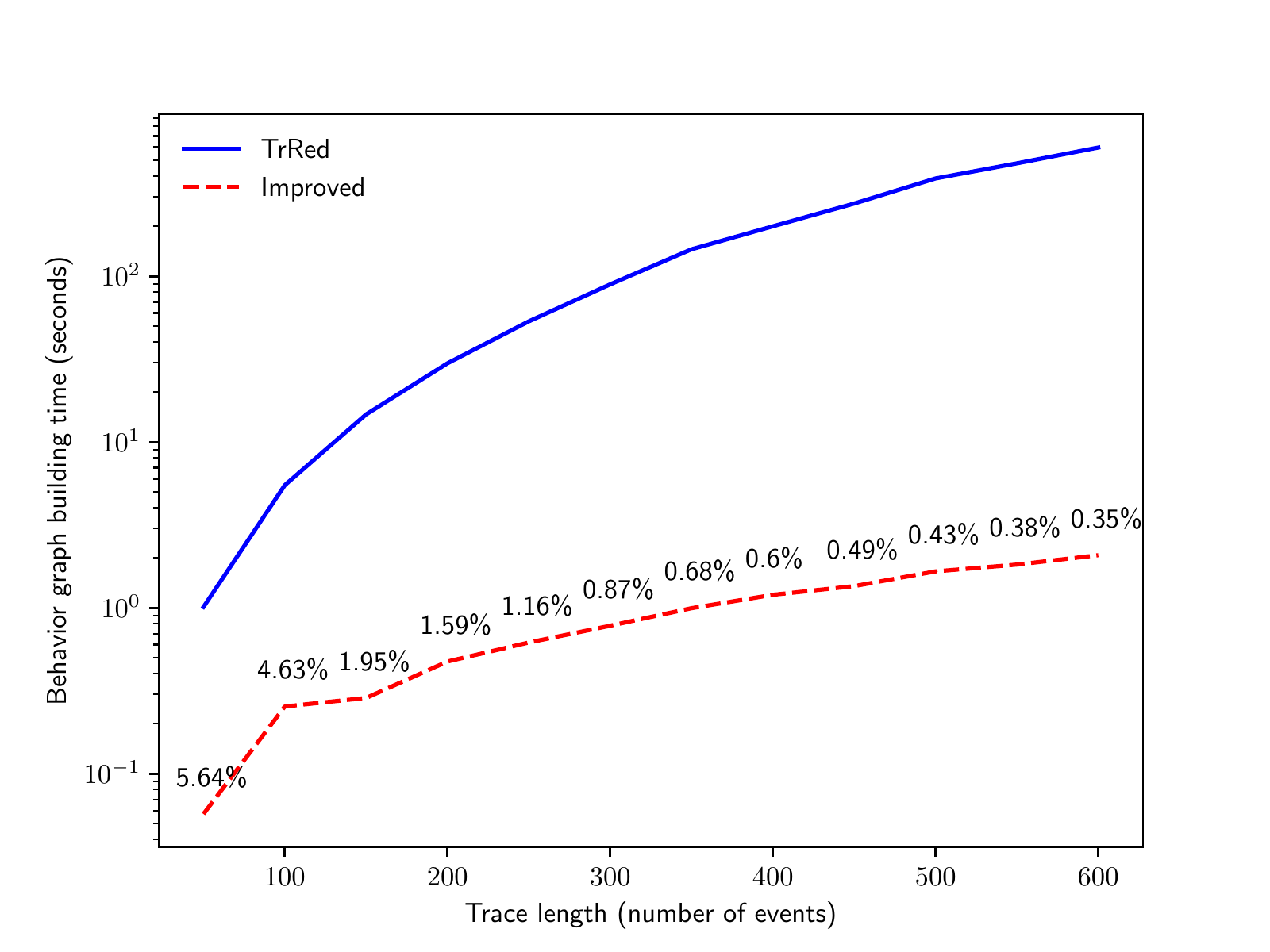}
	\caption{Time in seconds for the creation of the behavior graphs for synthetic logs with $n = 100$ traces and $p = 0.5$ of uncertain events, with increasing trace length $l$.}
	\label{fig:lengths}
\end{figure}

The next experiment tackles \emph{Q3}, by inspecting the difference in execution time for the two algorithms in function of the percentage of events with an uncertain timestamp in the event log. Keeping constant the values $n = 100$ and $l = 100$, we progressively increased the percentage $p$ of events with an uncertain timestamp and measured computation time. As presented in Figure~\ref{fig:probs}, the time required for behavior graph construction remains almost constant for our proposed algorithm, while it is very slightly decreasing for the baseline algorithm. This behavior is expected, and is justified by the fact that the worst-case scenario for the baseline algorithm is a trace that has no uncertainty on the timestamp: in that case, the behavior graph is simply a chain of nodes representing the total order in a sequence of events with certain timestamps, thus the transitive reduction needs to find and remove a higher number of edges from the directed graph. This worst-case scenario occurs at $p = 0$, explaining why the computation time needed by the transitive reduction is at its highest. It is important to note, however, that for all values of $p$ our new algorithm runs is significantly more efficient than the baseline algorithm: with $p = 0$, the new algorithm takes $0.47\%$ of the time needed by the na{\"i}ve construction, while for $p = 1$ this figure grows to $4.39\%$.

\begin{figure}[H]
	\centering
	\includegraphics[width=.8\textwidth, keepaspectratio]{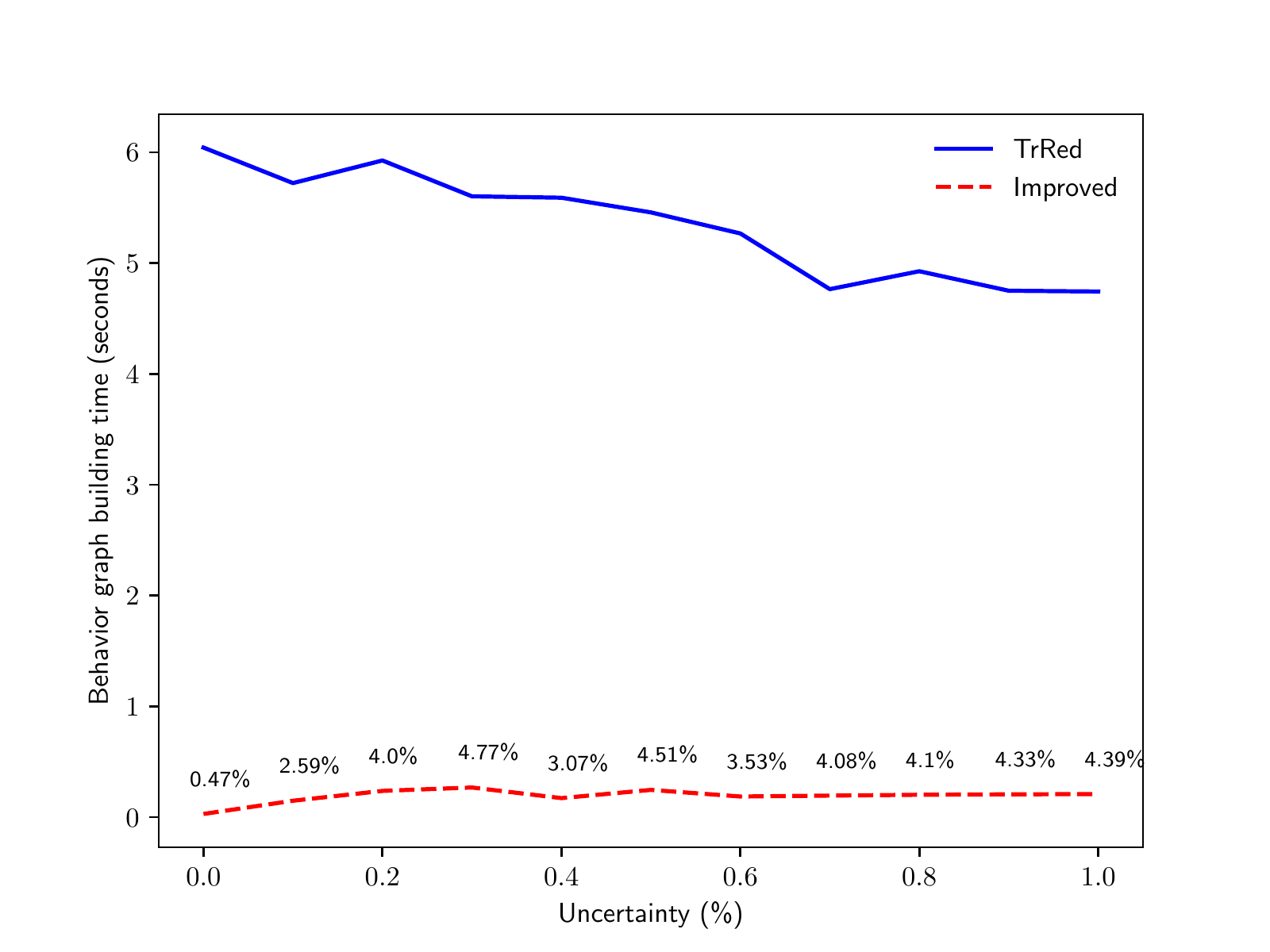}
	\caption{Time in seconds for the creation of the behavior graphs for synthetic logs with $n = 100$ traces of length $l = 100$ events, with increasing percentages of timestamp uncertainty $p$.}
	\label{fig:probs}
\end{figure}

An additional experiment is illustrated to provide an answer to \emph{Q4}. Similarly to the first experiment, we increase the number of traces $n$ in the uncertain log, while keeping the other parameters fixed: $l = 10$ and $p = 0.5$. We then perform the behavior graph construction with both methods, and we measure the memory consumption derived from the transitive reduction method (keeping in memory one behavior graph for each uncertain trace) versus the improved method (which generates a multiset of behavior graphs, one for each variant in the uncertain log).

\begin{figure}[H]
	\centering
	\includegraphics[width=.8\textwidth, keepaspectratio]{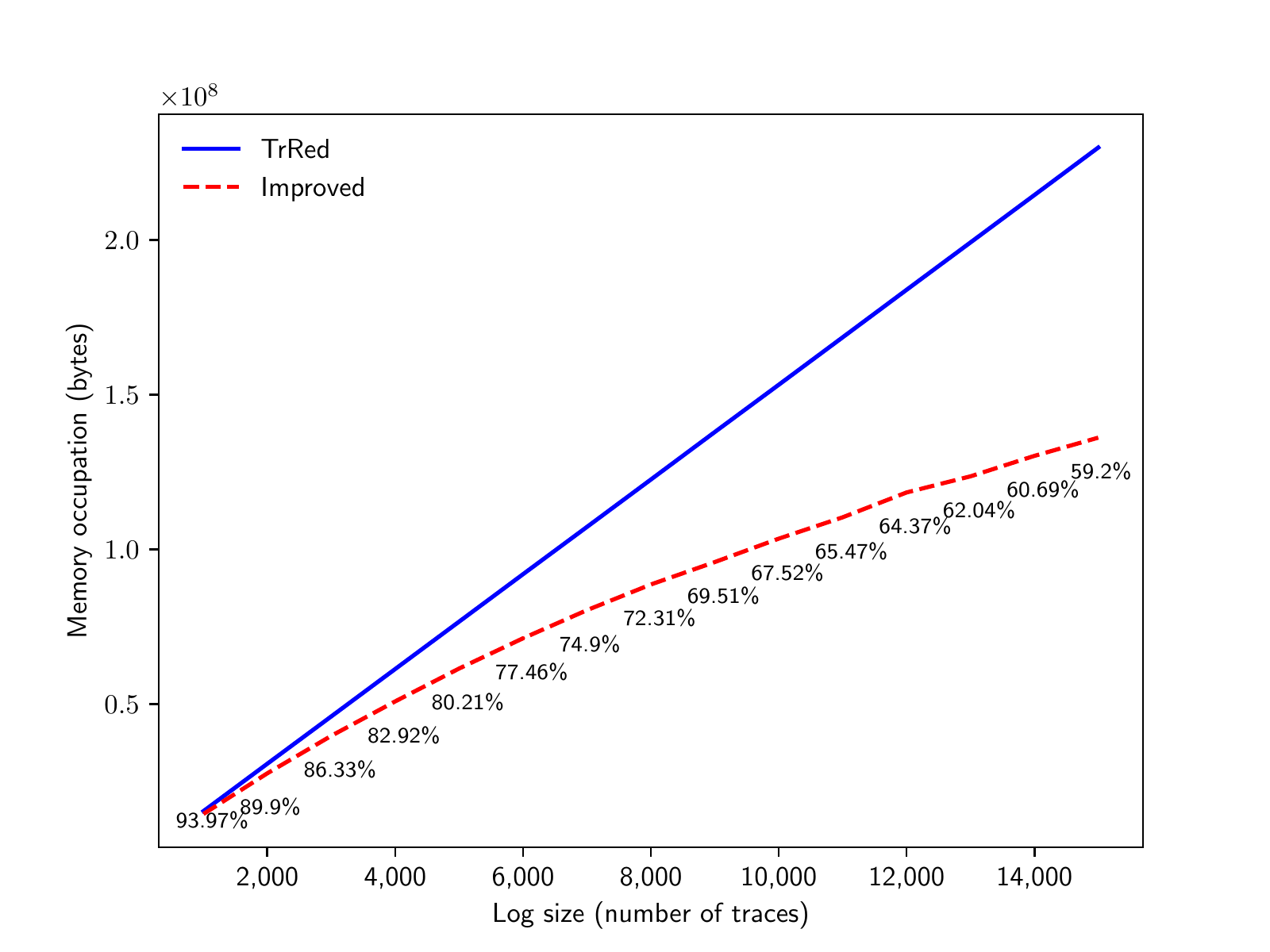}
	\caption{Memory consumption in bytes needed to store the behavior graphs for synthetic uncertain event logs with traces of length $l = 10$ events and timestamp uncertainty of $p = 0.5$, with an increasing number of traces $n$.}
	\label{fig:mem}
\end{figure}

The results are summarized in Figure~\ref{fig:mem}. Note that when $n$ increases, more and more uncertain traces are characterized by the same behavior graph, and can then be grouped in the same variant. This allows the improved algorithm to store the uncertain log more effectively. At $n = 15000$, the space needed by the multiset of behavior graphs is $59.2\%$, a sizable improvement in memory requirements when analyzing uncertain event logs of substantial dimensions. This improvement in memory consumption is a consequence of the new technique utilized in this paper to obtain the timestamp list, which enables such refinement with respect to the technique illustrated in~\cite{pegoraro2020efficient}.

Finally, to elucidate research question \emph{Q5} we compared the computation time for behavior graphs creation on real-life event logs, where we artificially inserted timestamp uncertainty in progressively higher percentage of uncertain events $p$ as described for the experiments above. We considered three event logs: an event log tracking the activities of the help desk process of an Italian software company, a log related to the management of road traffic fines in an Italian municipality, and a log from the BPI Challenge 2012 related to a loan application process. The results, presented in Figure~\ref{fig:reallifeprobs}, closely adhere to the findings of the experiments with synthetically generated uncertain event data: the novel method provides a substantial speedup, that remains rather stable with respect to the percentage $p$ of uncertain events added in the log.

\begin{figure}[H]
	\centering
	\includegraphics[width=.8\textwidth, keepaspectratio]{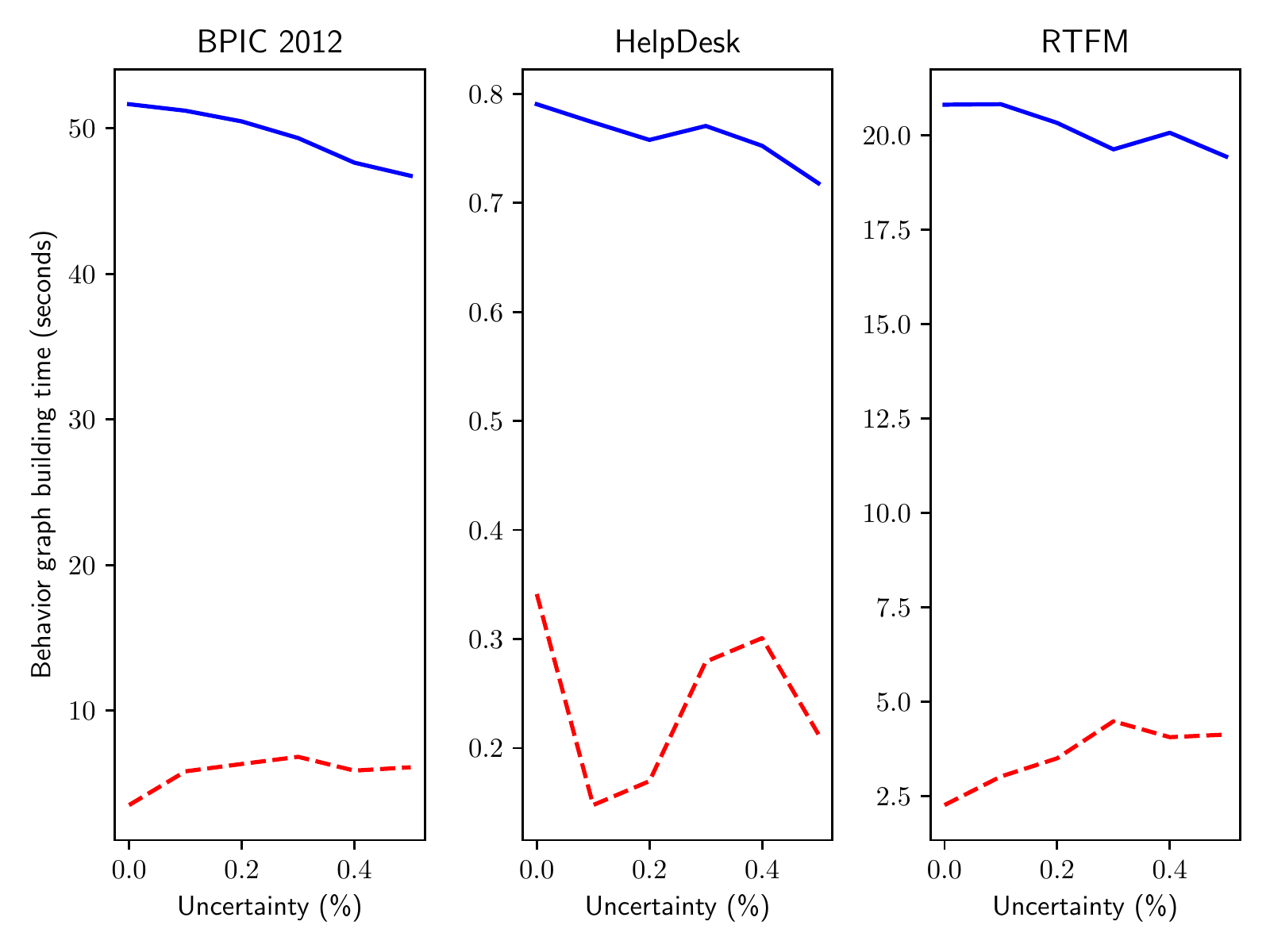}
	\caption{Execution times in seconds for real-life event logs with increasing percentages $p$ of timestamp uncertainty.}
	\label{fig:reallifeprobs}
\end{figure}

\subsection{Applications of the Behavior Graph Construction}

In Section~\ref{sec:introduction} we saw how building the behavior graph is a fundamental preprocessing step for both process discovery and conformance checking when dealing with uncertain event logs. In the previous section, we showed in practice how the novel algorithm presented in this paper impacts the computation time for the construction of behavior graphs. Now, let us have a glance into the effect of the speedup when applied to process mining techniques.

In this additional experiment we consider the conformance checking problem. In~\cite{pegoraro2019mining} we proposed an approach to compute upper and lower bounds for the conformance score of a trace against a reference Petri net through the alignment technique, which yields alignments for the best- and worst-case scenarios of an uncertain trace as illustrated in Section~\ref{sec:introduction}. The experiment is set up to assess the effect of the new behavior graph construction on the overall performance of conformance checking over uncertain data. We first generate a Petri net with $t$ transitions, simulate a log by playing out $n = 500$ traces, and add timestamp uncertainty with $p = 0.1$. We then compute the lower bound for conformance between the uncertain event log and the Petri net used as a source, and compare the overall execution time for conformance using the two different methods for the creation of the behavior graph. In this specific experiment, we also considered the other types of uncertainty in process mining illustrated in the taxonomy of~\cite{pegoraro2019mining}, as well as all types of uncertainty simulate on the same log.

\begin{figure}[H]
	\centering
	\includegraphics[width=\textwidth, keepaspectratio]{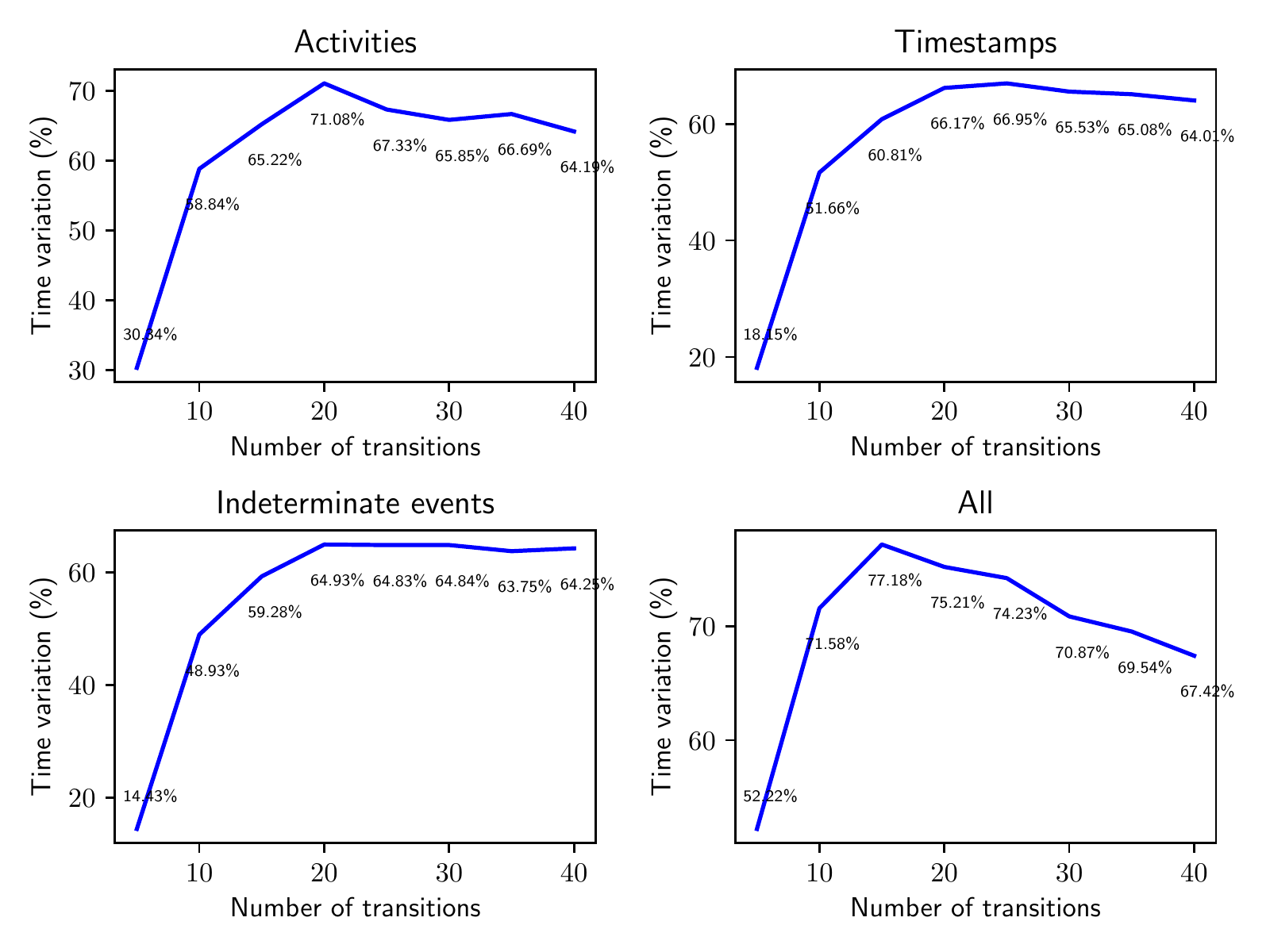}
	\caption{Relative variation in computation time obtained through the improved behavior graph construction when applied to the computation of conformance bounds between a synthetic uncertain log and a Petri net with an increasing number of transitions. The synthetic uncertain logs have $n = 500$ traces and timestamp uncertainty has been introduced with $p = 0.1$.}
	\label{fig:cc}
\end{figure}

The results are shown in Figure~\ref{fig:cc}. We can see that, on very small nets ($t = 5$), the alignment algorithm takes a short time to execute, so the speedup provided by the improved behavior graph construction has a larger impact on the total computation time (taking as little as $30.71\%$ of the time to calculate alignments). With the increase of $t$, the computation time for conformance checking using the fast construction of the behavior graph appears to stabilize around $65\%$ of the time needed if we employ the na{\"i}ve construction when considering only one type of uncertainty in isolation. This translates in a reduction of roughly $35\%$ of computation time for the very common problem of calculating the conformance score between event data and a reference model, a significant impact on performances of concrete applications of process mining over uncertain data. When compounding all types of uncertainty we see a similar effect, although for $t = 5$ the improved method takes $52.22\%$ of the time required by the baseline construction, a less dramatic effect than the other uncertainty settings. This is due to the fact that even at such small scales, the high number of realizations of traces slow down the alignment phase in the computation.

In evaluating this result, it is important to consider that alignments are a notoriously time-intensive technique~\cite{lee2017replay}, since the technique is based on an $\text{A}^*$ search on a state space that consists in pairs of the activities in the trace combined with the possible actions in the model. As a consequence, the impact of the algorithm presented in this paper is limited by the characteristics of the implementation of such alignment technique; combining it with more refined alignment algorithms would further improve the gain in speed.

In summary, the outcomes of the experiments show how our new algorithm hereby presented outperforms the previous method for creating the behavior graph on all the parameters in which the problem instance can scale in dimensions, in both the time and space dimensions. The experiment designed to answer \emph{Q3} shows that, like the na{\"i}ve algorithm, our novel method being is essentially insensitive to the percentage of events with uncertain timestamps contained in a trace. This fact is also verified by the experiment associated with \emph{Q5} on real-life data with added time uncertainty. While for every combination of parameters we benchmarked the novel algorithm runs in a fraction of time required by the baseline method, the experiments also confirm the improvements in asymptotic time complexity demonstrated through theoretical complexity analysis.

\section{Related Work}\label{sec:related}
The topic of process mining analysis over uncertain event data is relatively new, and little research has been carried out. The work introducing the concept of uncertainty in process mining, together with a taxonomy of the various types of uncertainty, specifically illustrated that if a trace displays uncertain attributes, it contains behavior, which can be effectively represented through graphical models -- namely, behavior graphs and behavior nets~\cite{pegoraro2019mining}. Differently to classic process mining, where we have a clearly defined separation between data and model and between the static behavior of data and the dynamic behavior of models, the distinction between data and models becomes more unclear in presence of uncertainty, because of the variety in behavior that affects the data. Representing traces through process models is utilized in~\cite{pegoraro2019mining} for the computation of upper and lower bounds for conformance scores of uncertain process traces against classic reference models. Another practical application of behavior graphs in the field of process mining over uncertain event data is presented in~\cite{pegoraro2019discovering}. Behavior graphs of uncertain traces are employed to determine the number of possible directly-follows relationships between uncertain events, with the end goal of automatically discovering process models from uncertain event data.

Albeit, as said, the application of the concept of uncertainty in data to process mining is recent, the same idea has precedents in the older field of data mining. Aggarwal and Philip~\cite{aggarwal2008survey} offer an overview of the topic of uncertain data and its analysis, with a strong focus on querying. Such data is modeled on the basis of probabilistic databases~\cite{suciu2011probabilistic}, a foundational concept in the setting of uncertain data mining. A branch of data mining particularly close to process mining is frequent itemsets mining: an efficient algorithm to search for frequent itemsets over uncertain data, the U-Apriori, have been described by Chui et al.~\cite{chui2007mining}.

Behavior graphs are Directed Acyclic Graphs (DAGs), which are widely used throughout many areas of science to represent with a graph-like model dependencies, precedence relationships, time information, or partial orders. They are effectively utilized in circular dependency analysis in software~\cite{al2014shape}, probabilistic graphical models~\cite{bayes1763lii}, dynamic graphs analytics~\cite{mariappan2019graphbolt}, and compiler design~\cite{aho2007compilers}. In process mining, \emph{Conditional Partial Order Graphs} (CPOGs) -- which consist of collections of DAGs -- have been exploited by Mokhov et al.~\cite{mokhov2016mining} to aid the task of process discovery.

We have seen throughout the paper that uncertainty on the timestamp dimension -- namely, representing at which time an event occurred with an interval of possible timestamps -- generates, on the precedence relationships of events, a partial order. Although uncertainty research in process mining provides a novel justification of partial ordering that spawns from specific attribute values, the idea of having a partial order instead of a total order among events in a trace has precedents in process mining research. Lu et al.~\cite{lu2014conformance}\cite{lu2014checking} examined the problem of conformance checking through alignments in the case of partially ordered traces, and developed a construct to represent conformance called a \emph{p-alignment}. Genga et al.~\cite{genga2018discovering} devised a method to identify highly frequent anomalous patterns in partially ordered process traces. More recently, van der Aa et al.~\cite{van2020partial} developed a probabilistic infrastructure that allows to infer the most likely linear extension of a partial order between events in a trace, with the goal of ``resolving'' the partial order.

An important aspect to notice is that conformance checking over uncertain event data is not to be confused with stochastic conformance checking, which concerns measuring conformance of certain event data against models enriched with probabilistic information. The probabilities decorating a stochastic model do not derive from uncertainties in event data, but rather from frequency of activities~\cite{leemans2020stochastic} or from performance indicators~\cite{rogge2013discovering}.

A review of related work on the topic of the asymptotic complexity of the transitive reduction and the equivalent problem of matrix multiplication is provided with the complexity analysis of the algorithms examined by this paper, in Section~\ref{sec:proofs}.

\section{Conclusions}\label{sec:conclusions}
The creation of the behavior graphs -- a graphical structure of paramount importance for the analysis of uncertain data in the domain of process mining -- plays a key role as initial processing step for both conformance checking and process discovery of process traces containing events with timestamp uncertainty, the most critical type of uncertain behavior. It allows, in fact, to represent the time relationship between uncertain events, which can be in a partial order. The behavior graph also carries the information regarding other types of uncertainty, like uncertain activity labels and indeterminate events. Such a representation is vital to establish which possible sequence of events in an uncertain trace most adhere to the behavior prescribed by a reference model, thereby enabling conformance checking; and to measure the number of possible occurrences of the directly-follows relationship between activities in an event log, making process discovery over uncertainty possible. Extracting behavior graphs from uncertain event data is thus concomitantly crucial and time consuming. In this paper, we show an improvement for the performance of uncertainty analysis by proposing a new algorithm that enables the creation of behavior graphs in quadratic time in the number of events in the trace. This novel method additionally allows for the representation of an uncertain log as a multiset of behavior graphs, which relevance is twofold: it allows to represent the control-flow information of an uncertain event log in a more compact manner by using less memory, and naturally extends the concept of variant -- central throughout the discipline of process mining -- to the uncertain domain. We proved the correctness of this novel algorithm, we showed asymptotic upper and lower bounds for its time complexity, and implemented performance experiments for this algorithm that effectively show the gain in computing speed it entails in real-world scenarios.

\bibliographystyle{splncs04}
\bibliography{bibliography}

\begin{thebibliography}{10}
\providecommand{\url}[1]{\texttt{#1}}
\providecommand{\urlprefix}{URL }
\providecommand{\doi}[1]{https://doi.org/#1}

\bibitem{van2020partial}
Van~der Aa, H., Leopold, H., Weidlich, M.: Partial order resolution of event
  logs for process conformance checking. Decision Support Systems p. 113347
  (2020)

\bibitem{van2016process}
Van~der Aalst, W.M.P.: {Process mining: data science in action}. Springer
  (2016)

\bibitem{adriansyah2010towards}
Adriansyah, A., van Dongen, B.F., van~der Aalst, W.M.P.: Towards robust
  conformance checking. In: International Conference on Business Process
  Management. pp. 122--133. Springer (2010)

\bibitem{aggarwal2008survey}
Aggarwal, C.C., Philip, S.Y.: A survey of uncertain data algorithms and
  applications. IEEE Transactions on knowledge and data engineering
  \textbf{21}(5),  609--623 (2008)

\bibitem{aho2007compilers}
Aho, A., Lam, M., Sethi, R., Ullman, J., Cooper, K., Torczon, L., Muchnick, S.:
  {Compilers: Principles, Techniques and Tools}  (2007)

\bibitem{aho1972transitive}
Aho, A.V., Garey, M.R., Ullman, J.D.: The transitive reduction of a directed
  graph. SIAM Journal on Computing  \textbf{1}(2),  131--137 (1972)

\bibitem{al2014shape}
Al-Mutawa, H.A., Dietrich, J., Marsland, S., McCartin, C.: On the shape of
  circular dependencies in {J}ava programs. In: 2014 23rd Australian Software
  Engineering Conference. pp. 48--57. IEEE (2014)

\bibitem{bayes1763lii}
Bayes, T.: {LII. An Essay towards solving a problem in the doctrine of chances.
  By the late Rev. Mr. Bayes, FRS communicated by Mr. Price, in a letter to
  John Canton, AMFR S}. Philosophical transactions of the Royal Society of
  London (53),  370--418 (1763)

\bibitem{berti2019process}
Berti, A., van Zelst, S.J., van~der Aalst, W.M.P.: {Process Mining for Python
  (PM4Py): Bridging the Gap Between Process- and Data Science}. In: ICPM Demo
  Track (CEUR 2374). p. 13–16 (2019)

\bibitem{chui2007mining}
Chui, C.K., Kao, B., Hung, E.: Mining frequent itemsets from uncertain data.
  In: Pacific-Asia Conference on knowledge discovery and data mining. pp.
  47--58. Springer (2007)

\bibitem{coppersmith1990matrix}
Coppersmith, D., Winograd, S.: Matrix multiplication via arithmetic
  progressions. Journal of symbolic computation  \textbf{9}(3),  251--280
  (1990)

\bibitem{d2005using}
D'Alberto, P., Nicolau, A.: {Using recursion to boost ATLAS's performance}. In:
  High-Performance Computing. pp. 142--151. Springer (2005)

\bibitem{flavska2007transitive}
Fla{\v{s}}ka, V., Je{\v{z}}ek, J., Kepka, T., Kortelainen, J.: Transitive
  closures of binary relations. i. Acta Universitatis Carolinae. Mathematica et
  Physica  \textbf{48}(1),  55--69 (2007)

\bibitem{genga2018discovering}
Genga, L., Alizadeh, M., Potena, D., Diamantini, C., Zannone, N.: Discovering
  anomalous frequent patterns from partially ordered event logs. Journal of
  Intelligent Information Systems  \textbf{51}(2),  257--300 (2018)

\bibitem{kalvin1983generation}
Kalvin, A.D., Varol, Y.L.: On the generation of all topological sortings.
  Journal of Algorithms  \textbf{4}(2),  150--162 (1983)

\bibitem{le2012faster}
Le~Gall, F.: Faster algorithms for rectangular matrix multiplication. In: 2012
  IEEE 53rd annual symposium on foundations of computer science. pp. 514--523.
  IEEE (2012)

\bibitem{le2014powers}
Le~Gall, F.: Powers of tensors and fast matrix multiplication. In: Proceedings
  of the 39th international symposium on symbolic and algebraic computation.
  pp. 296--303. ACM (2014)

\bibitem{lee2017replay}
Lee, W.L.J., Verbeek, H.M.W., Munoz-Gama, J., van~der Aalst, W.M.P.,
  Sep{\'u}lveda, M.: Replay using recomposition: alignment-based conformance
  checking in the large. In: BPM Demo Track and BPM Dissertation Award
  co-located with 15th International Conference on Business Process Management
  (BPM 2017). CEUR-WS (2017)

\bibitem{leemans2013discovering}
Leemans, S.J.J., Fahland, D., van~der Aalst, W.M.P.: Discovering
  block-structured process models from event logs-a constructive approach. In:
  International conference on applications and theory of Petri nets and
  concurrency. pp. 311--329. Springer (2013)

\bibitem{leemans2020stochastic}
Leemans, S.J.J., Polyvyanyy, A.: Stochastic-aware conformance checking: An
  entropy-based approach. In: International Conference on Advanced Information
  Systems Engineering. pp. 217--233. Springer (2020)

\bibitem{lu2014conformance}
Lu, X., Fahland, D., van~der Aalst, W.M.P.: Conformance checking based on
  partially ordered event data. In: International conference on business
  process management. pp. 75--88. Springer (2014)

\bibitem{lu2014checking}
Lu, X., Mans, R.S., Fahland, D., van~der Aalst, W.M.P.: Conformance checking in
  healthcare based on partially ordered event data. In: Proceedings of the 2014
  IEEE Emerging Technology and Factory Automation (ETFA). pp.~1--8. IEEE (2014)

\bibitem{mariappan2019graphbolt}
Mariappan, M., Vora, K.: Graphbolt: Dependency-driven synchronous processing of
  streaming graphs. In: Proceedings of the Fourteenth EuroSys Conference 2019.
  p.~25. ACM (2019)

\bibitem{mokhov2016mining}
Mokhov, A., Carmona, J., Beaumont, J.: Mining conditional partial order graphs
  from event logs. In: Transactions on Petri Nets and Other Models of
  Concurrency XI, pp. 114--136. Springer (2016)

\bibitem{pegoraro2019mining}
Pegoraro, M., van~der Aalst, W.M.P.: {Mining uncertain event data in process
  mining}. In: 2019 International Conference on Process Mining (ICPM). pp.
  89--96. IEEE (2019)

\bibitem{pegoraro2019discovering}
Pegoraro, M., Uysal, M.S., van~der Aalst, W.M.P.: Discovering process models
  from uncertain event data. In: International Conference on Business Process
  Management. pp. 238--249. Springer (2019)

\bibitem{pegoraro2020efficient}
Pegoraro, M., Uysal, M.S., van~der Aalst, W.M.P.: Efficient construction of
  behavior graphs for uncertain event data. In: International Conference on
  Business Information Systems (BIS). Springer (2020)

\bibitem{rogge2013discovering}
Rogge-Solti, A., van~der Aalst, W.M.P., Weske, M.: Discovering stochastic petri
  nets with arbitrary delay distributions from event logs. In: International
  Conference on Business Process Management. pp. 15--27. Springer (2013)

\bibitem{stothers2010complexity}
Stothers, A.J.: On the complexity of matrix multiplication. Ph.D. thesis,
  University of Edinburgh (2010)

\bibitem{strassen1969gaussian}
Strassen, V.: Gaussian elimination is not optimal. Numerische mathematik
  \textbf{13}(4),  354--356 (1969)

\bibitem{suciu2011probabilistic}
Suciu, D., Olteanu, D., R{\'e}, C., Koch, C.: Probabilistic databases.
  Synthesis lectures on data management  \textbf{3}(2),  1--180 (2011)

\bibitem{van2011process}
Van Der~Aalst, W., Adriansyah, A., De~Medeiros, A.K.A., Arcieri, F., Baier, T.,
  Blickle, T., Bose, J.C., Van Den~Brand, P., Brandtjen, R., Buijs, J., et~al.:
  Process mining manifesto. In: International Conference on Business Process
  Management. pp. 169--194. Springer (2011)

\bibitem{williams2012multiplying}
Williams, V.V.: Multiplying matrices faster than {C}oppersmith-{W}inograd. In:
  ACM Symposium on Theory of Computing (STOC). vol.~12, pp. 887--898 (2012)

\end{thebibliography}


\end{document}